\documentclass[12pt]{article}

\usepackage{amsmath,amssymb,amsthm,amsfonts}
\usepackage{fullpage}
\def\Enn{{\mathbb{N}}}
\def\Reals{{\mathbb{R}}}
\def\Zee{{\mathbb{Z}}}
\def\Cee{{\mathbb{C}}}
\usepackage{fullpage}
\usepackage{float}

\usepackage{pgf}
\usepackage{tikz}
\usetikzlibrary{arrows,automata}

\theoremstyle{plain}
\newtheorem{theorem}{Theorem}
\newtheorem{corollary}[theorem]{Corollary}
\newtheorem{lemma}[theorem]{Lemma}
\newtheorem{proposition}[theorem]{Proposition}

\theoremstyle{definition}
\newtheorem{definition}[theorem]{Definition}

\theoremstyle{remark}

\newcommand{\phiright}{\ \stackrel{\rightarrow}{\varphi^{\omega}} \! }
\newcommand{\hright}{\ \stackrel{\rightarrow}{h^{\omega}} \! }
\newcommand{\muright}{\ \stackrel{\rightarrow}{\mu^{\omega}} \! }

\newcommand{\hleft}{\ \stackrel{\leftarrow}{h^{\omega}} \!\! }

\title{Avoiding Three Consecutive Blocks of the Same Size and Same Sum}

\author{Julien Cassaigne \\
Institut de Math\'ematiques de Luminy\\
Case 907, 163 avenue de Luminy \\
13288 Marseille Cedex 9 \\
France\\
{\tt cassaigne@iml.univ-mrs.fr}\\
\ \\
James D. Currie\\
Department of Mathematics and Statistics \\
University of Winnipeg\\
515 Portage Avenue\\
Winnipeg, Manitoba R3B 2E9 \\
Canada\\
{\tt j.currie@uwinnipeg.ca} \\
\ \\
Luke Schaeffer and Jeffrey Shallit\\
School of Computer Science \\
University of Waterloo\\
Waterloo, ON  N2L 3G1 \\
Canada\\
{\tt l3schaef@student.cs.uwaterloo.ca} \\
{\tt shallit@cs.uwaterloo.ca}
}

\begin{document}

\maketitle

\begin{abstract}
We show that there exists an infinite word over the alphabet
$\lbrace 0, 1, 3, 4 \rbrace$ containing no three consecutive blocks
of the same size and the same sum.  This answers an open problem
of Pirillo and Varricchio from 1994.
\end{abstract}

\section{Introduction}

Avoidability problems in words have received much attention since
the seminal papers of Axel Thue \cite{Thue:1906,Thue:1912,Berstel:1995}.
Generally speaking, the goal is to construct an infinite word over
a finite alphabet with no factor (i.e., a contiguous block of symbols)
having some property, or to show that no such word exists.

Thue constructed an infinite word over a $2$-letter alphabet
containing no factor that is an {\it overlap} (i.e., a finite
word of the form
$axaxa$, where $a$ is a single letter and $x$ is a possibly empty
word), and he also constructed an infinite word
over a $3$-letter alphabet containing no factor that is a {\it square} (i.e.,
a finite nonempty word of the form $ww$).

Thue used iterated morphisms to construct his words.  Given a finite
alphabet $\Sigma$ and a morphism $h:\Sigma^* \rightarrow
\Sigma^*$ satisfying $h(a) = ax$ for some $a \in \Sigma$ and
$x \in \Sigma^*$, we can iterate $h$ to obtain
$$\hright(a) := a \, x \, h(x) \, h^2(x) \, \cdots, $$
which is infinite iff $h^i (x) \not= \epsilon$ for all $i$.  It is
easy to see that $\hright(a)$ is actually a fixed point of $h$; that is,
$h(\hright(a)) = \hright(a)$.  A sufficient (but not necessary)
condition for
$h^i(x) \not= \epsilon$ is that $h$ be {\it nonerasing}, that is,
$h(a) \not= \epsilon$ for all $a \in \Sigma$.

Thue used the morphism $\mu(0) = 01$ and $\mu(1) = 10$.  The fixed point
$\muright(0)$, known as the Thue-Morse word ${\bf t} = 01101001\cdots$,
is overlap-free.  Such
a morphism is called $2$-uniform since each letter is mapped to an image
of size $2$.
It can also be shown that the
fixed point of the (nonuniform) morphism given by $2 \rightarrow 210$,
$1 \rightarrow 20$, $0 \rightarrow 1$, is squarefree.

Erd\H{o}s \cite{Erdos:1961}
introduced the notion of abelian avoidability.  An {\it abelian
$k$-th power} for $k \geq 2$
is a finite nonempty word of the form $ x_1 x_2 \cdots x_k$
where $|x_1| = \cdots = |x_k|$ and each $x_i$ is a permutation of $x_1$.
Dekking \cite{Dekking:1979b}
constructed an infinite word over $\lbrace 0, 1 \rbrace$
containing no abelian $4$-th powers, and an infinite word over a
$3$-letter alphabet containing no abelian cubes.  
The former is given by the fixed point of the morphism 
$a \rightarrow abb$, $b \rightarrow aaab$, and the latter is
given by the fixed point of the morphism
$a \rightarrow aabc$, $b \rightarrow bbc$, and $c \rightarrow acc$.
Ker\"{a}nen \cite{Keranen:1992} constructed an infinite word over a
$4$-letter alphabet containing no abelian squares, using an
$85$-uniform morphism.  In all three cases the
alphabet size is optimal.

In what follows we assume our finite alphabet $\Sigma$ is a subset of
$\Enn$.
An {\it additive $k$-th power} for $k \geq 2$ is a finite nonempty word of the
form $x_1 x_2 \cdots x_k$ where $|x_1| = \cdots = |x_k|$ and 
$\sum x_1 = \sum x_2 = \cdots = \sum x_k$, where by $\sum x_i$ we mean
the sum of the elements appearing in the word $x_i$.
Since two words of the same length over $\lbrace 0, 1 \rbrace$ have
the same sum if and only if they are permutations of each other,
Dekking's result mentioned above shows that it is possible to
avoid additive $4$th-powers.  

In a 1994 paper, Pirillo and Varricchio \cite{Pirillo&Varricchio:1994}
raised the following question:  do there exist infinite words
avoiding additive squares or additive cubes?  They raised the question in the
context of semigroup theory, as follows:

Let $S$ be a semigroup, let $k \geq 1$ be an integer, and
$\varphi:\Sigma^+ \rightarrow S$ be a morphism.  We say that a 
nonempty word $w$
is a {\it uniform $k$-power, mod $\varphi$} if it can be written in the
form $w = w_1 \cdots w_k$ with $\varphi(w_1) = \cdots = \varphi(w_k)$
and $|w_1| = \cdots = |w_k|$.
If there exists an integer $R(\varphi, k)$ such that each word
$w \in \Sigma^+$ with length $\geq R(\varphi, k)$ contains a
factor that is a uniform $k$-power, mod $\varphi$, 
then we say that $\varphi$ is
{\it uniformly $k$-repetitive}.  If for every 
finite alphabet $\Sigma$, every morphism $\varphi:\Sigma^+ \rightarrow S$
is uniformly $k$-repetitive, then we say that
that $S$ is {\it uniformly $k$-repetitive}.  Pirillo and Varricchio proved

\begin{proposition}
Let $k \geq 1$ be an integer.  Then the following are equivalent:

\begin{itemize}
\item[(a)] $\Enn^+$ is not uniformly $k$-repetitive;

\item[(b)] every finitely generated and uniformly $k$-repetitive
semigroup is finite.
\end{itemize}

\end{proposition}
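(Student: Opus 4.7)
The proof splits into two implications.

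\emph{$(b) \Rightarrow (a)$.} The semigroup $(\Enn^+, +)$ is generated by the single element $1$ and is visibly infinite. Under hypothesis (b), if $\Enn^+$ were uniformly $k$-repetitive, it would be forced to be finite, a contradiction. Hence $\Enn^+$ is not uniformly $k$-repetitive.

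\emph{$(a) \Rightarrow (b)$: approach.} I argue by contrapositive. Let $T$ be a finitely generated infinite semigroup; I aim to construct a morphism $\eta: \Sigma^+ \to T$ that is not uniformly $k$-repetitive, thereby showing $T$ is not uniformly $k$-repetitive. Invoking (a), fix a morphism $\varphi: \Sigma^+ \to \Enn^+$ that is not uniformly $k$-repetitive. By K\"onig's lemma applied to the finitely branching tree of finite words over $\Sigma$ avoiding uniform $k$-powers modulo $\varphi$ (which is infinite by the failure of uniform $k$-repetitivity), there exists an infinite word $\mathbf{u}$ over $\Sigma$ with no factor that is a uniform $k$-power modulo $\varphi$.

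The plan is to find a morphism $\iota: \Enn^+ \to T$ that is injective on the image of $\varphi$, and then take $\eta := \iota \circ \varphi$. With such an $\iota$, any candidate uniform $k$-power factor $w_1 \cdots w_k$ of $\mathbf{u}$ modulo $\eta$ would satisfy $\iota(\varphi(w_i)) = \iota(\varphi(w_j))$ for all $i,j$, hence $\varphi(w_i) = \varphi(w_j)$ by injectivity, contradicting the choice of $\mathbf{u}$. A morphism $\iota: \Enn^+ \to T$ is necessarily of the form $n \mapsto t^n$ for $t := \iota(1)$, and is injective on all of $\Enn^+$ precisely when $t$ has infinite order in $T$. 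If some element of $T$ has infinite order, the construction succeeds immediately.

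\emph{The main obstacle} is the residual case in which $T$ is periodic, i.e., every element has finite order. Since the structure theorem for finitely generated commutative semigroups implies that any finitely generated commutative periodic semigroup is finite, an infinite periodic finitely generated $T$ must be essentially non-commutative, as exemplified by certain Burnside-type semigroups. For such $T$, no morphism $\iota: \Enn^+ \to T$ is injective on an unbounded set, and the one-generator composition strategy above breaks down. The Pirillo--Varricchio argument handles this case by exploiting the non-commutative structure of $T$ to construct $\eta$ directly, rather than by factoring through $\Enn^+$; this refinement is the technical heart of the proposition, and I expect it to be where the proof requires the most care.
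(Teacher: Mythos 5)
Your implication (b) $\Rightarrow$ (a) is correct, and your treatment of (a) $\Rightarrow$ (b) is sound precisely in the case where the finitely generated infinite semigroup $T$ contains an element $t$ of infinite order: there $\iota(n) = t^n$ is injective, and composing a non-uniformly-$k$-repetitive morphism $\varphi \colon \Sigma^+ \rightarrow \Enn^+$ with $\iota$ does exactly what you say. Note also that the paper itself offers no proof of this proposition --- it is quoted from Pirillo and Varricchio --- so the comparison here is with what a complete proof requires, not with an argument in the text.

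The genuine gap is the case you flag and then set aside: $T$ finitely generated, infinite, and periodic. This case is nonempty. For instance, take the Rees quotient of $\{a,b,c\}^+$ by the ideal of words containing a square: its elements are the square-free words together with a zero, the product is concatenation when the result is square-free and $0$ otherwise; by Thue's theorem it is infinite, it is $3$-generated, and every element $s$ satisfies $s^2 = s^3$ (indeed $s^2 = 0$), so it is a Morse--Hedlund-type negative answer to the Burnside problem for semigroups. For such $T$ no morphism $\Enn^+ \rightarrow T$ is injective on an unbounded set, so your factorization strategy cannot be repaired, and your closing sentence ("the Pirillo--Varricchio argument handles this case by exploiting the non-commutative structure of $T$") is a placeholder, not an argument: you neither identify the construction needed nor prove anything about it. Yet this is exactly the substance of (a) $\Rightarrow$ (b): one must show, using (a) where needed, that an infinite finitely generated periodic semigroup is not uniformly $k$-repetitive for the same fixed $k$, which means exhibiting a morphism into $T$ itself together with arbitrarily long words avoiding uniform $k$-powers modulo that morphism. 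Such a construction depends on the internal structure of $T$ (in the example above, infinite square-free words already serve as witnesses, with no reference to $\Enn^+$ at all), and in general it is the technical heart of the Pirillo--Varricchio proof, in the circle of finiteness-condition arguments of Justin, Brown, and de Luca--Varricchio, none of which your proposal sets up. As it stands, you have proved (b) $\Rightarrow$ (a) and only the infinite-order case of (a) $\Rightarrow$ (b).
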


In this context, $\Enn^+$ is not uniformly $k$-repetitive if and only
there exists an infinite word over a finite subset of $\Enn$ that avoids
additive $k$-th powers.   Pirillo and Varricchio observed that, by
Dekking's result mentioned above, 
the semigroup $\Enn^+$ is not uniformly $4$-repetitive.  They remarked,
``We do not know whether $\Enn^+$ is uniformly $2$-repetitive or
uniformly $3$-repetitive.  This seems to be a difficult problem in
combinatorial number theory.''

This theme was taken up again by Halbeisen and
Hungerb\"uhler in 2000 \cite{Halbeisen&Hungerbuhler:2000}, apparently
not knowing of the paper of Pirillo and Varricchio.  They asked 
(in our terminology) if it
is possible to avoid additive squares.

Five other recent papers mentioning the problem of avoiding
additive powers are
\cite{Grytczuk:2008,Richomme&Saari&Zamboni:2009,Cassaigne&Richomme&Saari&Zamboni:2010,Freedman:2010,Au&Robertson&Shallit:2011}.

% G. Richomme, K. Saari, and L. Q. Zamboni. Abelian complexity in
% minimal subshifts. To appear in J. London Math. Soc. (2)

% http://arxiv.org/pdf/1005.2514v1

In this paper we show that there exists an infinite word over the
alphabet $\lbrace 0, 1, 3, 4 \rbrace$ that avoids additive cubes.  This
answers one of the open questions of Pirillo and Varricchio.  As
a consequence we get that $\Enn^+$ is not uniformly $3$-repetitive.

\newcommand{\abs}[1]{\left\lvert#1\right\rvert}
\newcommand{\norm}[1]{\left\lVert#1\right\rVert}
\newcommand{\parent}[1]{\mathrm{par}(#1)} 

\section{Notation and definitions}

We consider the alphabet
$\Sigma = \{ 0, 1, 3, 4 \}$.   Define the morphism 
$\varphi : \Sigma^{*} \rightarrow \Sigma^{*}$ by
\begin{align*}
\varphi(0) &= 03 \\
\varphi(1) &= 43 \\
\varphi(3) &= 1 \\
\varphi(4) &= 01 
\end{align*}
Define the infinite word 
$${\bf w} = \phiright(0) = 031430110343430310110110314303434303434303143011031011011031011011 \cdots .$$
We will show that $\bf w$ contains no additive cubes.

By a {\it block} we will mean a finite factor of $\bf w$.  
The {\it sum} of a block is the sum of its symbols (interpreting
the symbols $0,1,3,4$ as integers). We define a {\it double block} to be a
pair of consecutive blocks, and a {\it triple block} to be a triple of
consecutive blocks. 

\subsection{Matrices and eigenvalues}
Let $\psi : \Sigma^{*} \rightarrow \mathbb \Zee^4$ be the Parikh vector
map, which sends a word $x \in \Sigma^{*}$ to a vector $(|x|_{0},
|x|_{1}, |x|_{3}, |x|_{4})^{T} \in \mathbb \Zee^4$, where $|x|_{a}$ is
the number of occurrences of $a$ in $x$.  We let $M$ denote the
incidence matrix of $\varphi$, given by

\begin{align*}
M &= \begin{pmatrix}
1 & 0 & 0 & 1 \\
0 & 0 & 1 & 1 \\
1 & 1 & 0 & 0 \\
0 & 1 & 0 & 0
\end{pmatrix}.
\end{align*}
Note that $\psi(\varphi(x)) = M \psi(x)$. 

The eigenvalues of $M$ are
the roots of its characteristic polynomial $X^4 - X^3 - 2X^2 + 2X - 1$ and
are, to limited precision,
as follows:\footnote{In this paper, without further comment,
we will frequently make use of floating
point approximations to certain algebraic numbers.  We leave it to the
reader to verify that the approximations we use are accurate enough to
verify our claims.}
\begin{align*}
\lambda_1 &\doteq 1.690284494616614 \\
\lambda_2 &\doteq -1.505068413621472 \\
\lambda_3 &\doteq 0.407391959502429 + 0.476565325929643i \\
\lambda_4 &\doteq 0.407391959502429 - 0.476565325929643i .
\end{align*}
Let $\Lambda$ denote the diagonal matrix of eigenvalues.

The eigenvectors of M are given by columns of the following matrix,
\begin{align*}
Q &=
\begin{pmatrix} 
0.47239594 & 0.17807189  & 0.62696309 & 0.62696309 \\
0.55118080 & 0.67138434  & -0.29375620-0.05534050i & -0.29375620+0.05534050i \\
0.60556477 & -0.56439708 & 0.27824282-0.46132816i & 0.27824282+0.46132816i \\
0.32608759 & -0.44608227 & -0.37154337+0.29878887i & -0.37154337-0.29878887i 
\end{pmatrix}.
\end{align*}
The eigenvectors are normalized to have Euclidean norm 1. Together these
matrices are an eigenvalue decomposition of $M$ since $M = Q \Lambda Q^{-1}$.

We make extensive use of this decomposition. In particular, let $\tau : \Cee^4
\rightarrow \Cee^4$ be the linear map corresponding to left-multiplication by
$Q^{-1}$. Also define linear maps $\tau_j : \Cee^4 \rightarrow \Cee$ for
$j=1,2,3,4$ such that $\tau(x) = (\tau_1(x), \tau_2(x), \tau_3(x), \tau_4(x))$.
Then since $Q^{-1} M = \Lambda Q^{-1}$, we have $\tau_j (M \mathbf{x}) =
\lambda_j \tau_j(\mathbf{x})$ for all vectors $\mathbf{x}$.

The matrix for $\tau$ is just $Q^{-1}$, and we have
\begin{align*}
Q^{-1} &\doteq \begin{pmatrix}
   0.5124 					&   0.5979 						&   0.3537 						&   0.6569 \\
   0.1806 					&   0.6809 						&  -0.4524 						&  -0.5724 \\
   0.5788 - 0.5749i &  -0.3219 + 0.2183i 	&  -0.0690 + 0.6165i 	&  -0.1662 - 0.6810i \\
   0.5788 + 0.5749i &  -0.3219 - 0.2183i 	&  -0.0690 - 0.6165i 	&  -0.1662 + 0.6810i \\
\end{pmatrix}.
\end{align*}
The rows of this matrix give us the maps $\tau_1, \ldots, \tau_4$. 
Thus, for example, $\tau_1 (a,b,c,d) \doteq .5124 + .5979b + .3537c + .6569d$.

\subsection{Indexing and parents}

Let ${\bf w}[i]$ denote the $i$th symbol of ${\bf w}$,
with ${\bf w}[0] = 0$ being the
first symbol of $\bf w$. We let ${\bf w}[p, q)$ denote the symbols from $p$ to
$q$ excluding the symbol at $q$, as long as $p \leq q$. We interpret
${\bf w}[p, p)$ to be the empty word.

Define the function $\eta$ that maps a position $p$ to
$|\varphi({\bf w}[0, p))|$.
Since ${\bf w} = \varphi({\bf w})$, the morphism
$\varphi$ maps any prefix ${\bf w}[0, p)$ of $\bf w$ to
some other prefix of $\bf w$. Therefore $\varphi({\bf w}[0, p))$ is the unique
prefix of length $\eta(p)$, that is, ${\bf w}[0, \eta(p))$. For example, 
${\bf w}[0,3) = 031$, and ${\bf w}[0,3) = 031$ maps to $\varphi(031) =
03143 = {\bf w}[0,5)$, and hence $\eta(3) = 5$.

Note that $\varphi$ is nonerasing,
so it follows that $|\varphi(x)| \geq
|x|$. Since $\varphi(0) = 03$, it follows that $|\varphi(x)| > |x|$ for
any nonempty prefix $x$ of $\bf w$. Hence $\eta(p) \geq p$ for
all $p$, and the inequality is strict for $p > 0$. The function
$\eta$ is also clearly a non-decreasing function, so $a \leq b$ implies
$\eta(a) \leq \eta(b)$. We also have
\begin{align*}
{\bf w}[0, \eta(p)) \ {\bf w}[\eta(p), \eta(p+1)) &= {\bf w}[0, \eta(p+1)) \\
&= \varphi({\bf w}[0, p+1)) \\
&= \varphi({\bf w}[0, p)) \varphi({\bf w}[p]) \\
&= {\bf w}[0, \eta(p)) \varphi({\bf w}[p]),
\end{align*}
so ${\bf w}[\eta(p), \eta(p+1)) = \varphi({\bf w}[p])$.
Thus, the image of ${\bf w}[p]$ is ${\bf w}[\eta(p), \eta(p+1))$.

By definition, some position $p$ maps to $\varphi({\bf w}[p])$ starting
at $\eta(p)$, so we can think of each symbol in ${\bf w}[\eta(p), \eta(p+1))$
as arising from ${\bf w}[p]$. We define a function to associate the positions
in $[\eta(p), \eta(p+1))$ with $p$, given below.
\begin{definition}
For a position $p$ in $\bf w$, we let $\parent{p}$ denote the \emph{parent}
of $p$, which we define to be the unique position $t$ such that
$\eta(t) \leq p < \eta(t+1)$. Also, a \emph{child} of
a position $p$ is any position $q$ such that $\parent{q} = p$.
\end{definition}
Parents have two elementary properties, which we present without proof. 
\begin{enumerate}
\item The inequality $\parent{p} \leq p$ holds for all $p$ with strict inequality unless $p = 0$. 
\item If $a \leq b$ then $\parent{a} \leq \parent{b}$.  In other words, $\parent{x}$ is a non-decreasing function. 
\end{enumerate}

The following table illustrates these concepts for the first few
positions:
\begin{table}[H]
\begin{center}
\begin{tabular}{|c|c|c|c|c|c|c|c|c|c|c|c|c|c|c|c|c|}
\hline
$p$          & 0 & 1 & 2 & 3 & 4 & 5 & 6 & 7 & 8 & 9 & 10 & 11 & 12 & 13 & 14 & 15 \\
\hline
${\bf w}[p]$ & 0 & 3 & 1 & 4 & 3 & 0 & 1 & 1 & 0 & 3 &  4 &  3 &  4 &  3 &  0 &  3 \\
\hline
$\eta(p)$    & 0 & 2 & 3 & 5 & 7 & 8 &10 &12 &14 &16 & 17 & 19 & 20 & 22 & 23 & 25 \\
\hline
$\parent{p}$ & 0 & 0 & 1 & 2 & 2 & 3 & 3 & 4 & 5 & 5 &  6 &  6 &  7 &  7 &  8 &  8 \\
\hline
\end{tabular}
\end{center}
\end{table}

We now form an infinite graph $\mathcal{T}$ with positions as vertices and
edges from each vertex to its children (in the position sense). It follows from
these properties that there is a path in $\mathcal{T}$ from $0$ to any
vertex. Also $\mathcal{T}$ is acyclic with the exception of the loop
at 0. In other words, with the exception of a single loop, $\mathcal{T}$ is an
infinite tree with 0 at the root. Part of this tree is shown in
Figure~\ref{fig:tree}, where we see that $\mathbf{w}$ is obtained by a
level-order traversal of $\mathcal{T}$. Indeed, the levels are equal to $a =
0$, $x = 3$, $\varphi(x) = 1$, $\varphi^2(x) = 43$, etc. so that 
$$ w = a \, x \, \varphi(x) \, \varphi(x)^2 \, \varphi(x)^3 \, \ldots.$$ 

Since $\mathcal{T}$ is a tree, we are often interested in the path from
0 (the root) to an arbitrary vertex. We define the ancestral sequence
of a position $p$ to be the sequence $\{ p_i \}_{i=0}^{\infty}$ such
that $p_0 = p$ and $p_{i+1} = \parent{p_{i}}$ for all $i \geq 0$.

\begin{figure}[htbp]
\centering
\begin{tikzpicture}[>=latex,text height=1.5ex,text depth=0.25ex]
  \matrix[row sep=1cm,column sep=0.5cm] {
  		& &
  		\node (w0) {$w[0]=0$}; & & & & \\
  		& &
  		\node (w1) {$w[1]=3$}; & & & & \\
  		& &
  		\node (w2) {$w[2]=1$}; & & & & \\

		& &
  		\node (w3) {$w[3]=4$}; 
		&
  		\node (w4) {$w[4]=3$}; & & & \\

		&
  		\node (w5) {$w[5]=0$}; & 
  		\node (w6) {$w[6]=1$}; & 
		&
  		\node (w7) {$w[7]=1$}; & 
		& \\

  		\node (w8) {$w[8]=0$}; & 
  		\node (w9) {$w[9]=3$}; & 
  		\node (w10) {$w[10]=4$}; & 
  		\node (w11) {$w[11]=3$}; & 
  		\node (w12) {$w[12]=4$}; & 
  		\node (w13) {$w[13]=3$}; \\
    };
    
	\draw[->,loop above](w0) to node[above]{$\varepsilon$} (w0);
	\draw[->](w0) to node[left]{$0$} (w1);
	\draw[->](w1) to node[left]{$\varepsilon$} (w2);
	\draw[->](w2) to node[left]{$\varepsilon$} (w3);
	\draw[->](w2) to node[right]{$4$} (w4);
	\draw[->](w3) to node[left]{$\varepsilon$} (w5);
	\draw[->](w3) to node[right]{$0$} (w6);
	\draw[->](w4) to node[right]{$\varepsilon$} (w7);
	\draw[->](w5) to node[left]{$\varepsilon$} (w8);
	\draw[->](w5) to node[right]{$0$} (w9);
	\draw[->](w6) to node[left]{$\varepsilon$} (w10);
	\draw[->](w6) to node[right]{$4$} (w11);
	\draw[->](w7) to node[left]{$\varepsilon$} (w12);
	\draw[->](w7) to node[right]{$4$} (w13);
\end{tikzpicture}
\caption{The first 6 levels of $\mathcal{T}$}
\label{fig:tree}
\end{figure}
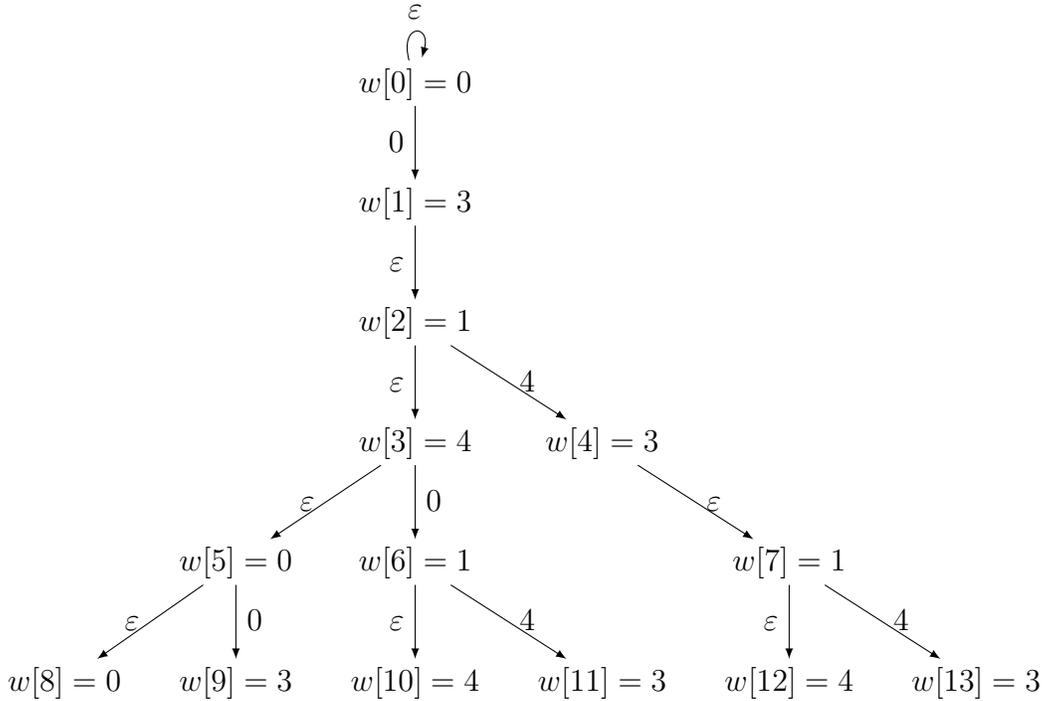

Suppose we are given a vector of positions $\mathbf{p} = (p_1, \ldots,
p_k)$ such that $p_1 \leq p_2 \leq \cdots \leq p_k$. Then $p_i$ and
$p_{i+1}$ delimit the block $b = {\bf w}[p_i, p_{i+1})$ for each $1 \leq i <
k$, so we have $k-1$ consecutive blocks. We extend the definition of
parents to vectors by the equation
\begin{align*}
\parent{p_1, \ldots, p_k} := (\parent{p_1}, \ldots, \parent{p_k}).
\end{align*}
Given consecutive blocks $b_1 \cdots b_{k-1}$ delimited by $\mathbf{p}$,
define $\parent{b_1 \cdots b_{k-1}}$ to be the blocks delimited by
$\parent{\mathbf{p}}$. We also extend the definition of ancestral
sequence to be the sequence of iterated parents for anything that has
parents, e.g., positions, vectors of positions, and consecutive blocks.

\subsection{Parikh vectors of prefixes and blocks}
Define the function $\sigma(p) := \psi({\bf w}[0, p))$. That is,
$\sigma(p)$ is the Parikh vector of the prefix of $\bf w$ up to, but
not including, the position $p$. Note that
\begin{align*}
\sigma(p) \cdot (1,1,1,1) &= |{\bf w}[0, p)|_{0} + |{\bf w}[0, p)|_{1} + |{\bf w}[0, p)|_{3} +
|{\bf w}[0, p)|_{4} \\
&= |{\bf w}[0,p)| = p.
\end{align*}
It follows that $\sigma$ is injective. 

If $(p,q)$ is an edge in $\mathcal{T}$ then ${\bf w}[0, q)$ contains
$\varphi({\bf w}[0, p))$ and perhaps another symbol, so we expect that
$\sigma(q) \approx M \sigma(p)$. The following lemma makes this
precise.
\begin{lemma} \label{sigmarec}
Given a position $p$, there is a bijection between children of $p$ and 
proper prefixes of $\varphi({\bf w}[p])$.  (By a {\it proper} prefix of a word $x$, we mean a 
possibly empty prefix different from $x$.)
Furthermore, if $q$ is a child of $p$ and $a$ is the corresponding
prefix of $\varphi({\bf w}[p])$ then we have
\begin{align*}
\sigma(q) &= M \sigma(p) + \psi(a) .
\end{align*}
\end{lemma}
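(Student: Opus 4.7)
The plan is to unfold the definitions. By the definition of $\parent{\cdot}$, the children of $p$ are exactly those positions $q$ with $\eta(p) \leq q < \eta(p+1)$. The excerpt already showed that $\mathbf{w}[\eta(p), \eta(p+1)) = \varphi(\mathbf{w}[p])$, so $\eta(p+1) - \eta(p) = |\varphi(\mathbf{w}[p])|$. Therefore the children of $p$ are the positions $q = \eta(p) + j$ for $0 \leq j < |\varphi(\mathbf{w}[p])|$. I would set up the bijection by sending such a $q$ to the length-$j$ prefix $a$ of $\varphi(\mathbf{w}[p])$; since proper prefixes of a word are in one-to-one correspondence with lengths $0, 1, \ldots, |\varphi(\mathbf{w}[p])| - 1$, this is manifestly a bijection.

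For the Parikh identity, I would split the prefix at position $\eta(p)$ and compute
\begin{align*}
\sigma(q) &= \psi\bigl(\mathbf{w}[0,q)\bigr) \\
&= \psi\bigl(\mathbf{w}[0,\eta(p))\bigr) + \psi\bigl(\mathbf{w}[\eta(p),q)\bigr).
\end{align*}
The first term is handled by the identity $\mathbf{w}[0,\eta(p)) = \varphi(\mathbf{w}[0,p))$ (recorded earlier in the excerpt as a consequence of $\mathbf{w}$ being a fixed point of $\varphi$), together with the compatibility $\psi \circ \varphi = M \circ \psi$ between the Parikh map and the incidence matrix; this gives $\psi(\mathbf{w}[0,\eta(p))) = M\sigma(p)$. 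The second term: since $q = \eta(p)+j$, the factor $\mathbf{w}[\eta(p),q)$ is the length-$j$ prefix of $\mathbf{w}[\eta(p),\eta(p+1)) = \varphi(\mathbf{w}[p])$, which by construction equals the word $a$ associated to $q$ under the bijection. Hence $\psi(\mathbf{w}[\eta(p),q)) = \psi(a)$, and the desired formula follows.

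I do not anticipate any real obstacle: everything reduces to two facts already established in the preceding paragraphs, namely $\mathbf{w}[\eta(p),\eta(p+1)) = \varphi(\mathbf{w}[p])$ and the commutation $\psi \circ \varphi = M \circ \psi$. The only point worth being careful about is that the bijection uses \emph{proper} prefixes (lengths strictly less than $|\varphi(\mathbf{w}[p])|$), which matches the half-open interval $[\eta(p),\eta(p+1))$ enumerating the children of $p$.
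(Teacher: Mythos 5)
Your proposal is correct and follows essentially the same route as the paper: identify the children of $p$ as the positions in $[\eta(p),\eta(p+1))$, match each child $q$ with the proper prefix $a = \mathbf{w}[\eta(p),q)$ of $\varphi(\mathbf{w}[p])$, and compute $\sigma(q)$ by splitting the prefix at $\eta(p)$ and using $\psi(\varphi(x)) = M\psi(x)$. The only difference is that you make the bijection with proper prefixes slightly more explicit than the paper does, which is fine.
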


\begin{proof}
Recall that $\varphi$ maps the symbol ${\bf w}[p]$ to ${\bf w}[\eta(p),
\eta(p+1))$. By definition, $q$ is a child of $p$ if and only if
$\eta(p) \leq q < \eta(p+1)$. Then
\begin{align*}
\sigma(q) &= \psi({\bf w}[0, q)) \\
&= \psi({\bf w}[0,\eta(p)) {\bf w}[\eta(p),q)) \\
&= \psi({\bf w}[0,\eta(p))) + \psi({\bf w}[\eta(p),q)) .
\end{align*}
Let $a = {\bf w}[\eta(p), q)$,
a proper prefix of ${\bf w}[\eta(p), \eta(p+1)) = \varphi({\bf w}[p])$. Then
\begin{align*}
\sigma(q) &= \psi(\varphi({\bf w}[0, p))) + \psi(a) \\
&= M \psi({\bf w}[0, p)) + \psi(a) \\
&= M \sigma(p) + \psi(a) .
\end{align*}
\end{proof}

Thus, every edge $(p, q)$ in $\mathcal{T}$ has a corresponding word
$a$, as proper prefix of $\varphi({\bf w}[p])$. We can think of the $a$
corresponding to an edge as an edge label. This allows us to extend the
previous lemma from a single edge to any walk in $\mathcal{T}$.

\begin{corollary}
\label{sigmarecwalks}
If $p_0 \cdots p_{\ell}$ is a walk in $\mathcal{T}$ with edges $a_{1}, \ldots, a_{\ell}$ then 
\begin{align*}
\sigma(p_\ell) &= \sum_{i=1}^{\ell} M^{\ell - i} \psi(a_{i}) + M^\ell \sigma(p_0).
\end{align*}
\end{corollary}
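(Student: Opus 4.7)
The plan is to prove this by a straightforward induction on the length $\ell$ of the walk, using Lemma \ref{sigmarec} as the engine at each step.

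For the base case $\ell = 0$, the walk consists of the single vertex $p_0$ with no edges, so the sum is empty and the claim reduces to $\sigma(p_0) = M^0 \sigma(p_0)$, which is immediate.

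For the inductive step, I would assume the formula holds for walks of length $\ell$ and consider a walk $p_0, p_1, \ldots, p_{\ell+1}$ with edge labels $a_1, \ldots, a_{\ell+1}$. Applying the induction hypothesis to the initial subwalk $p_0, \ldots, p_\ell$ gives an expression for $\sigma(p_\ell)$. Then, since $(p_\ell, p_{\ell+1})$ is an edge in $\mathcal{T}$ with label $a_{\ell+1}$, Lemma \ref{sigmarec} yields $\sigma(p_{\ell+1}) = M \sigma(p_\ell) + \psi(a_{\ell+1})$. Substituting and distributing $M$ through the sum pushes each exponent $M^{\ell-i}$ up to $M^{\ell+1-i}$ and similarly promotes $M^\ell \sigma(p_0)$ to $M^{\ell+1}\sigma(p_0)$; the new term $\psi(a_{\ell+1}) = M^{0}\psi(a_{\ell+1})$ is then absorbed as the $i = \ell+1$ summand, giving exactly the claimed formula.

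There is no real obstacle here: the only thing to be careful about is bookkeeping of the exponents of $M$ so that the telescoping is transparent, and keeping the empty-sum convention consistent in the base case. Since $\mathcal{T}$ has a loop at $0$ but Lemma \ref{sigmarec} applies uniformly to any edge (the loop at $0$ corresponds to the empty prefix $a = \varepsilon$ of $\varphi(0)$, so $\psi(a) = 0$ and the formula $\sigma(0) = M \sigma(0) + 0$ is consistent), the induction goes through for arbitrary walks in $\mathcal{T}$, not merely root-to-vertex paths.
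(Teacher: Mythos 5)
Your proof is correct and follows essentially the same route as the paper: induction on the walk length, with the trivial base case $\ell=0$ and the inductive step peeling off the final edge via Lemma~\ref{sigmarec} ($\sigma(p_{\ell+1}) = M\sigma(p_\ell) + \psi(a_{\ell+1})$) before distributing $M$ through the sum. The extra remark about the loop at $0$ labelled $\varepsilon$ is fine but not needed, since Lemma~\ref{sigmarec} already covers every edge of $\mathcal{T}$ uniformly.
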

\begin{proof}
We use induction on $\ell$, the length of the walk. When $\ell = 0$ the result is trivial.
Otherwise, by Lemma~\ref{sigmarec} we have
\begin{align*}
\sigma(p_{\ell}) &= M \sigma(p_{\ell-1}) + \psi(a_{\ell}) .
\end{align*}
Then we apply the induction hypothesis and simplify to complete the proof:
\begin{align*}
\sigma(p_{\ell}) &= M \left( \sum_{i=1}^{\ell - 1} M^{\ell - 1 - i} \psi(a_{i}) + M^{\ell-1} \sigma(p_0) \right) + M^0 \psi(a_{\ell}) \\
&= \sum_{i=1}^{\ell} M^{\ell - i} \psi(a_{i}) + M^{\ell} \sigma(p_0).
\end{align*}
\end{proof}
Now suppose we apply this corollary to an ancestral sequence, $\{ p_i \}_{i=0}^{\infty}$. Let $a_i$ be the label for the edge from $p_i$ to $p_{i+1}$, for each $i$. Then the corollary says that 
\begin{align*}
\sigma(p_0) &= \sum_{i=0}^{k-1} M^i \psi(a_i) + M^{k} \sigma(p_k).
\end{align*}
For $k$ large enough we get $p_k = 0$ so $\sigma(p_k) = 0$ and thus 
\begin{align}
\label{maineqn}
\sigma(p_0) &= \sum_{i=0}^{\infty} M^i \psi(a_i).
\end{align}

\subsection{A graph homomorphism}

Define a directed graph $\mathcal{Q} = (\Sigma, T)$ where vertices are
symbols, and with a set of labelled edges $T$ as shown in
Figure~\ref{fig:dfa}. Let there be an edge from $c \in \Sigma$ to $d
\in \Sigma$ labelled by $\ell \in \Sigma^{*}$ whenever $\ell$ is a
prefix of $\varphi(c)$ up to, but not including, some symbol $d$ in
$\varphi(c)$.

\begin{figure}[htp]
\begin{center}

\begin{tikzpicture}[->,>=stealth',shorten >=1pt,auto,node distance=2.8cm,
	                    semithick]
	  \tikzstyle{every state}=[circle]
	
	  \node[state] 				 (0)                    {0};
	  \node[state]         (4) [right of=0] 			{4};
	  \node[state]				 (1) [below of=4]				{1};
	  \node[state]				 (3) [below of=0] 			{3};
	  
	  \path (0) edge [min distance=15 mm] node {$\varepsilon$} (0)
	  			(0) edge						  				node {$0$} (3)
	  			(4) edge 				              node {$\varepsilon$} (0)
	  			(4) edge [bend left=10]			  node {$0$} (1)
	  			(1) edge [bend left=10]				node {$\varepsilon$} (4)
	  			(1) edge [bend left=10]				node {$4$} (3)
	  			(3) edge [bend left=10]				node {$\varepsilon$} (1);
\end{tikzpicture}
\end{center}
\caption{The directed graph $\mathcal{Q}$}
\label{fig:dfa}
\end{figure}
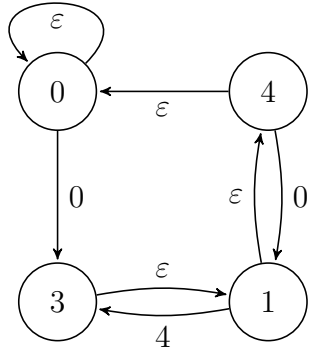

Notice that the map $\zeta$ that sends $x$ to ${\bf w}[x]$ maps vertices in
$\mathcal{T}$ to vertices in $\mathcal{Q}$. Then Lemma~\ref{sigmarec} says that
if an edge $(p, q)$ is labelled by $a$ then the word $a\zeta(q)$ 
is a prefix of $\varphi(\zeta(p))$, so there is an edge $(\zeta(p),
\zeta(q))$ in $\mathcal{Q}$ also labelled by $a$. Therefore $\zeta$ is a graph
homomorphism from $\mathcal{T}$ to $\mathcal{Q}$ that preserves edge labels.

\begin{definition}
We say a labelled digraph homomorphism $f \colon \mathcal{A} \rightarrow \mathcal{B}$ is \emph{child-bijective} if 
$f$ maps children of $a$ to children of $f(a)$ bijectively 
for all vertices $a \in \mathcal{A}$.
\end{definition}

We claim that $\zeta$ is child-bijective. Every child of $\zeta(p)$ in
$\mathcal{Q}$ corresponds to a prefix of $\varphi({\bf w}[p])$, and these
prefixes correspond to children of $p$ according to Lemma~\ref{sigmarec},
therefore children of $\zeta(p)$ correspond to children of $p$. Furthermore, if
$q$ is a child of $p$ then $\zeta(q)$ is a child of $\zeta(p)$, so $\zeta$ is
indeed child-bijective. Child-bijectivity implies a bijection between walks
starting at $p$ and $\zeta(p)$ respectively. 
\begin{proposition}
\label{bijectwalk}
Let $f \colon \mathcal{A} \rightarrow \mathcal{B}$ be a child-bijective labelled digraph homomorphism. We define a function $\hat{f}$ that sends walks in $\mathcal{A}$ to walks in $\mathcal{B}$ so that 
\begin{align*}
v_1 \cdots v_k \mapsto f(v_1) \cdots f(v_k).
\end{align*}
If we fix some $v \in \mathcal{A}$ then $\hat{f}$ is bijection between walks starting at $v$ in $\mathcal{A}$ and walks starting at $f(v)$ in $\mathcal{B}$.
\end{proposition}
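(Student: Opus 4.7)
The plan is to prove the bijection by induction on walk length, treating injectivity and surjectivity separately. Before anything else, I would check that $\hat{f}$ is well-defined, i.e., that the image sequence is actually a walk in $\mathcal{B}$: consecutive vertices $v_i, v_{i+1}$ in the original walk are joined by an edge, and since $f$ is a labelled digraph homomorphism this edge maps to an edge from $f(v_i)$ to $f(v_{i+1})$ carrying the same label. So $\hat{f}$ does send walks-from-$v$ to walks-from-$f(v)$.

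For injectivity of $\hat{f}$ restricted to walks of length $k$ starting at $v$, I would induct on $k$. The base case $k=1$ is trivial since there is only one such walk. For the inductive step, given two walks $v v_2 \cdots v_k$ and $v v_2' \cdots v_k'$ with the same image, the induction hypothesis applied to the length-$(k-1)$ prefixes forces $v_i = v_i'$ for $i < k$; in particular $v_{k-1} = v_{k-1}'$. Now $v_k$ and $v_k'$ are both children of the common vertex $v_{k-1}$, and $f(v_k) = f(v_k')$, so the child-bijectivity of $f$ at $v_{k-1}$ yields $v_k = v_k'$.

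For surjectivity, I would again induct on $k$, building a preimage walk one vertex at a time. Given a walk $u_1 u_2 \cdots u_k$ in $\mathcal{B}$ starting at $f(v)$, the inductive hypothesis supplies a walk $v_1 \cdots v_{k-1}$ in $\mathcal{A}$ with $f(v_i) = u_i$. Since $u_k$ is a child of $u_{k-1} = f(v_{k-1})$, the child-bijectivity of $f$ at $v_{k-1}$ produces a unique child $v_k$ of $v_{k-1}$ with $f(v_k) = u_k$, extending the walk as required.

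No part of this is genuinely difficult; the whole content of the proposition is that ``child-bijective'' transfers by induction from single edges to full walks. The only thing one has to be slightly careful about is that the child-bijection is used for \emph{both} directions: once to deduce $v_k = v_k'$ from $f(v_k) = f(v_k')$ (injectivity at the last step), and once to lift $u_k$ to a preimage (surjectivity at the last step).
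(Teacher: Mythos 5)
Your proposal is correct and follows essentially the same route as the paper: induction on walk length, with child-bijectivity applied to the final vertex of the walk (the paper phrases this as decomposing a walk into its length-$(\ell-1)$ prefix plus the last edge and composing bijections, whereas you split the same induction into injectivity and surjectivity, which is only a cosmetic difference). Your preliminary check that $\hat{f}$ is well-defined, i.e.\ that homomorphism images of walks are walks with the same labels, is a sensible addition that the paper leaves implicit.
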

\begin{proof}
Fix the vertex $v$ in $\mathcal{A}$. Let $X_\ell$ be the set of walks in $\mathcal{A}$ starting at $v$ of length $\ell$ and likewise let $Y_\ell$ be the set of walks in $\mathcal{B}$ starting at $f(v)$ of length $\ell$. Clearly $\hat{f}$ maps walks of length $\ell$ to walks of length $\ell$, so it suffices to show that $\hat{f}$ restricts to a bijection between $X_\ell$ and $Y_\ell$ for each $\ell$.

Our proof proceeds by induction on $\ell$. There is only one element in both $X_{0}$ and $Y_{0}$, so $\hat{f}$ restricts to a bijection between $X_0$ and $Y_0$. Let $v_0 \cdots v_\ell$ be a walk in $X_\ell$. We decompose this walk into a shorter walk $v_0 \cdots v_{\ell-1}$ in $X_{\ell-1}$ and the final edge $(v_{\ell-1}, v_{\ell})$. By induction, $\hat{f}$ is a bijection between $X_{\ell-1}$ and $Y_{\ell-1}$, so the walk maps to $f(v_0) \cdots f(v_{\ell-1})$. Since $f$ is child-bijective, $f$ maps neighbours of $v_{\ell-1}$ to neighbours of $f(v_{\ell-1})$, so $v_{\ell}$ goes to $f(v_{\ell})$. Now we recompose $f(v_0) \cdots f(v_{\ell-1})$ and the edge $(f(v_{\ell-1}), f(v_{\ell}))$, to give the walk $f(v_0) \cdots f(v_{\ell}) = \hat{f}(v_0 \cdots v_{\ell})$. Since all the maps were bijective, $\hat{f}$ is indeed a bijection. 
% TODO fix this
\end{proof}
By this proposition, $\zeta$ gives us a bijection between walks in $\mathcal{T}$ starting at some position $p$ and walks in $\mathcal{Q}$ starting at ${\bf w}[p]$. If we are only interested in the edges of a walk in $\mathcal{T}$, then we can use an equivalent walk in $\mathcal{Q}$. The following definition illustrates this idea.

\begin{definition}
Define the set of vectors $\mathcal{D}_{\ell} \subseteq \Zee^4$ by
\begin{align*}
\mathcal{D}_{\ell} &= \left\{ \sum_{i=0}^{\ell-1} M^i \psi(a_i) \ : \ \text{$a_{\ell-1} \cdots a_0$ the 
edges of a walk in $\mathcal{T}$} \right\}.
\end{align*}
\end{definition}
By Proposition~\ref{bijectwalk}, this is equivalent to the following alternate definition of $\mathcal{D}_{\ell}$:
\begin{align*}
\mathcal{D}_{\ell} &= \left \{ \sum_{i=0}^{\ell-1} M^{i} \psi(a_{i}) \ : \ \text{$a_{\ell-1} \cdots a_0$ the edges of a walk in $\mathcal{Q}$} \right \} .
\end{align*}
Later we will need to find all elements of $\mathfrak{D}_{9}$, and this
second definition is important because there are only finitely many
walks of length 9 in $\mathcal{Q}$ compared to infinitely many in
$\mathcal{T}$. Thus, it is straightforward to enumerate the walks in
$\mathcal{Q}$, and thus elements of $\mathfrak{D}_{9}$.

\section{Comparing block sequences} \label{sec:seq}

In this section we are concerned with blocks $b_0$ and $c_0$.
Beginning in Section~\ref{subsec:intersec},
we take $b_0$ and $c_0$ to be blocks
with the same length and sum, and let $\{ b_i \}_{i=0}^{\infty}$ and
$\{ c_i \}_{i=0}^{\infty}$ be the corresponding
ancestral sequences. The
goal of this section is to show that $\psi(b_i)$ and $\psi(c_i)$, the
Parikh vectors of corresponding ancestors for $b_0$ and $c_0$, are
approximately the same. That is, $\psi(b_i) - \psi(c_i)$ is bounded by
a constant that does not depend on $b_0$ and $c_0$.

There are four main steps to this proof:
\begin{enumerate}
\item We bound $\abs{\tau_3(\psi(b_i) - \psi(c_i))}$ and $\abs{\tau_4(\psi(b_i) - \psi(c_i))}$, proving that $\psi(b_i) - \psi(c_i)$ is close to a 2-dimensional subspace. 
\item We show that the sum and length conditions force $\psi(b_0) - \psi(c_0)$ to be in a lattice, $\mathfrak{L}$. We also show that the intersection of this lattice with the 2-dimensional subspace is trivial, and hence $\psi(b_0) - \psi(c_0)$ belongs to a finite set of points. 
\item We bound $\abs{\tau_1(\psi(b_i) - \psi(c_i))}$ and $\abs{\tau_2(\psi(b_i) - \psi(c_i))}$. 
\item We show that since all the eigencoordinates are small, $\psi(b_i) - \psi(c_i)$ is short, and we discuss how to enumerate these short vectors for the next section. 
\end{enumerate}

\subsection{Bounding two coordinates} \label{sec:thirdandfourth}

We start by bounding the third and fourth eigencoordinates. For this
step we do not require that the blocks have the same length and sum,
so we state the theorem for any two blocks.

\begin{theorem}
If $b$ and $c$ are blocks (not necessarily consecutive) then 
\begin{align*}
\abs{\tau_3(\psi(b) - \psi(c))} &\leq C_3 \\
\abs{\tau_4(\psi(b) - \psi(c))} &\leq C_3
\end{align*}
where $C_3 \doteq 2.1758$ is a constant.
\label{thm34}
\end{theorem}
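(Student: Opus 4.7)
The plan is to exploit the eigenstructure of $M$: since $|\lambda_3|=|\lambda_4|<1$, the eigencoordinates $\tau_3$ and $\tau_4$ are contracted by $M$, so any recursion of the form $\psi(b)=M\psi(\parent{b})+O(1)$ collapses to a convergent geometric series in these directions.

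First, I would derive a block-level analogue of Lemma~\ref{sigmarec}. Applying that lemma to each delimiter $p,q$ of $b=\mathbf{w}[p,q)$ and subtracting gives
\[
\psi(b) \;=\; M\,\psi(\parent{b}) + \delta_b,\qquad \delta_b \;:=\; \psi(\pi_q) - \psi(\pi_p),
\]
where $\pi_x$ denotes the proper prefix of $\varphi(\mathbf{w}[\parent{x}])$ labelling the edge from $\parent{x}$ to $x$ in $\mathcal{T}$. Iterating this along the ancestral sequence $\{b_i\}_{i\geq 0}$---which terminates at the empty block, since both endpoints eventually descend to $0$---yields the finite sum
\[
\psi(b) \;=\; \sum_{i \geq 0} M^i\,\delta_i^b.
\]
Writing the analogous expansion for $c$ and subtracting,
\[
\psi(b) - \psi(c) \;=\; \sum_{i \geq 0} M^i\bigl(\delta_i^b - \delta_i^c\bigr).
\]

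Next I would apply $\tau_3$, using $\tau_3(M\mathbf{x})=\lambda_3\tau_3(\mathbf{x})$:
\[
\tau_3(\psi(b) - \psi(c)) \;=\; \sum_{i \geq 0} \lambda_3^{\,i}\,\tau_3(\delta_i^b - \delta_i^c).
\]
From the digraph $\mathcal{Q}$ in Figure~\ref{fig:dfa}, every edge label lies in $\{\varepsilon, 0, 4\}$, so each $\pi_x$ is in that small set and $\delta_i^b - \delta_i^c$ takes values in an explicitly enumerable finite subset of $\mathbb{Z}^4$. Let $K$ denote the maximum of $|\tau_3(v)|$ over this set, computed directly from the given entries of $Q^{-1}$. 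Since $|\lambda_3|<1$, the geometric series $\sum_i |\lambda_3|^i = 1/(1-|\lambda_3|)$ converges, and
\[
|\tau_3(\psi(b) - \psi(c))| \;\leq\; \frac{K}{1 - |\lambda_3|}.
\]
A careful evaluation of $K$ and $|\lambda_3|$ from the numerics in the text then yields the constant $C_3$. The bound for $\tau_4$ follows identically: $\lambda_4 = \overline{\lambda_3}$, and row $4$ of $Q^{-1}$ is the complex conjugate of row $3$, so $|\tau_4(v)| = |\tau_3(v)|$ for every real vector $v$.

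The main obstacle is bookkeeping for the sharp constant $C_3 \doteq 2.1758$: a loose triangle-inequality step that bounds $|\tau_3(\psi(b))|$ and $|\tau_3(\psi(c))|$ separately roughly doubles the bound, so one must work directly with the combined terms $\delta_i^b - \delta_i^c$, compute $K$ exactly over the enumerated set of possibilities, and retain enough numerical precision on $|\lambda_3|\doteq 0.626963$ and on the relevant entries of $Q^{-1}$ to verify the final inequality.
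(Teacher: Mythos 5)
Your series expansion is exactly the paper's starting point (subtract the two instances of Eq.~(\ref{maineqn}) for the delimiting positions, apply $\tau_3$, use $\tau_3(M\mathbf{x})=\lambda_3\tau_3(\mathbf{x})$, and handle $\tau_4$ by conjugation), so the structural part of your argument is sound. The gap is in how you propose to extract the constant $C_3\doteq 2.1758$. A uniform per-term bound plus the full geometric series cannot get there: the worst per-term value of $\abs{\tau_3(\delta_i)}$ is $\abs{\tau_3(1,0,0,0)}\doteq 0.81582$, so bounding each block separately gives $0.81582/(1-\abs{\lambda_3})\doteq 2.19$ per block, i.e.\ about $4.37$ for the difference; and your proposed fix of working with the combined terms $\delta_i^b-\delta_i^c$ buys nothing, since the combined term can equal $(2,0,0,0)$ (take $a_i'=0$, $a_i=\varepsilon$ for $b$ and the reverse for $c$), whose $\tau_3$-magnitude is exactly twice $0.81582$, so the combined geometric bound is again about $1.632/0.373\doteq 4.37$. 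No amount of numerical precision changes this; the obstruction is that the term-by-term estimate ignores the correlations between successive edge labels.

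The paper's actual source of sharpness is different, and is the idea missing from your plan: truncate the series at $i=9$ and treat the head exactly. The partial sums $\sum_{i=0}^{8}M^i\psi(a_i)$ are not arbitrary combinations of labels from $\{\varepsilon,0,4\}$ but must come from genuine length-$9$ walks in the finite digraph $\mathcal{Q}$; this is precisely why the set $\mathfrak{D}_9$ is introduced, it has only $301$ elements, and a direct computation shows $\abs{\tau_3(u)-\tau_3(v)}\leq 1.05517$ over all $u,v\in\mathfrak{D}_9$ --- far smaller than what a term-by-term estimate of nine terms would give. Only the tail $i\geq 9$ is bounded by your geometric argument, contributing $\approx 0.0327$. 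Ironically, the paper then \emph{does} use the crude triangle inequality you wanted to avoid, bounding $\abs{\tau_3(\psi(b))}$ and $\abs{\tau_3(\psi(c))}$ separately by $C_3/2\doteq 1.0879$. The specific value of $C_3$ matters downstream: Proposition~\ref{prop:bound} needs $C_3/\alpha<2$ (with $\alpha\doteq 1.4914$) to force $m,n\in\{-1,0,1\}$, and a bound of $4.37$ would not deliver that, so your proof as written establishes a weaker theorem than the one stated.
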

\begin{proof}
First, notice that $\tau_3$ is the complex conjugate of $\tau_4$. That is, for any $x \in \Reals^4$ we have $\tau_3(x) = \overline{\tau_4(x)}$. Therefore we only need to prove one of the inequalities because
\begin{align*}
\abs{\tau_3(\psi(b) - \psi(c))} &= \abs{\tau_4(\psi(b) - \psi(c))}.
\end{align*}
We will prove the first inequality. 
 
From Eq.~(\ref{maineqn}) we see that
$$ \sigma(p_0) = \sum_{i=0}^\infty M^i \psi(a_i).$$
If our block $b$ is delimited by $p_0$ and $q_0$, where 
\begin{align*}
\sigma(q_0) &= \sum_{i=0}^{\infty} M^i \psi(a_i'),
\end{align*}
then we let $\delta_i = \psi(a_i') - \psi(a_i)$ and get 
\begin{align*}
\psi(b) &= \sigma(q_0) - \sigma(p_0) \\
&= \sum_{i=0}^{\infty} M^{i} \psi(a_i') - \sum_{i=0}^{\infty} M^{i} \psi(a_i) \\
&= \sum_{i=0}^{\infty} M^{i} \delta_i .
\end{align*}  
Now apply $\tau_3$ to get
\begin{align*}
\tau_3(\psi(b)) &= \tau_3 \left( \sum_{i=0}^{\infty} M^{i} \delta_i \right) = \sum_{i=0}^{\infty} \lambda_3^i \tau_3(\delta_i).
\end{align*}
We consider the magnitude and separate the first 9 terms from the rest:
\begin{align*}
\abs{\tau_3(\psi(b))} &= \abs{\sum_{i=0}^{\infty} \lambda_3^i \tau_3(\delta_i)} \\
&\leq \abs{\sum_{i=0}^{8} \lambda_3^i \tau_3(\delta_i)} + \abs{\sum_{i=9}^{\infty} \lambda_3^i \tau_3(\delta_i)} .
\end{align*}
We bound the two parts separately, using different techniques. For the finite 
sum we have 
\begin{align*}
\abs{\tau_3 \left( \sum_{i=0}^{8} M^{i} \delta_i \right)} &= \abs{\tau_3 \left( \sum_{i=0}^{8} M^i \psi(a_i') \right) - \tau_3 \left(\sum_{i=0}^{8} M^{i} \psi(a_i) \right)} = \abs{\tau_{3}( \alpha' ) - \tau_{3}(\alpha) },
\end{align*}
where $\alpha = \sum_{i=0}^{8} M^i \psi(a_i)$ and $\alpha' =
\sum_{i=0}^{8} M^i \psi(a_i')$. Note that $\alpha$ and $\alpha'$ are in
$\mathfrak{D}_{9}$, so this is bounded by the maximum over all $u, v \in
\mathfrak{D}_{9}$ of $\abs{ \tau_{3}(u) - \tau_{3}(v) }$. It turns out
that there are only 301 vectors in $\mathfrak{D}_{9}$, so it is not
difficult for a computer program to determine that the maximum over all $u, v \in
\mathfrak{D}_9$ is achieved by
$u = (24, 30, 24, 12)$ and $v = (17, 25, 13, 5)$, and $\abs{ \tau_3(u)
- \tau_3(v) } \doteq  1.05517$.

For the infinite series, we first compute an upper bound for
$\abs{\tau_3(\delta_i)}$. Since $\delta_i = \psi(a_i') - \psi(a_i)$
where $a_i, a_i' \in \{ \varepsilon, 0, 4 \}$, it follows that
$\abs{\tau_3(\delta_i)}$ is less than the maximum over all $s, t \in \{
\varepsilon, 0, 4 \}$ of $\abs{\tau_3(\psi(s) - \psi(t))}$. The maximum
turns out to be $C = \abs{\tau_3(1,0,0,0)} \doteq 0.81582$, and is achieved
by $s = 0, t = \varepsilon$. Then
\begin{align*}
\abs{\sum_{i=9}^{\infty} \lambda_3^i \tau_3(\delta_i)} &\leq
\sum_{i=9}^{\infty} \abs{\lambda_3}^i \abs{\tau_3(\delta_i)} \leq C
\sum_{i=9}^{\infty} \abs{\lambda_3}^i = \frac{C \abs{\lambda_3}^9}{1 -
\abs{\lambda_3}} \doteq 0.032736 .
\end{align*}

Combining these two bounds gives us 
\begin{align*}
\abs{\tau_3(\psi(b))} &\leq \abs{\sum_{i=0}^{8} \lambda_3^i \delta_i} + \abs{\sum_{i=9}^{\infty} \lambda_3^i \delta_i} \leq 1.05517\cdots + 0.032736\cdots = 1.0879 \cdots = C_3/2.
\end{align*}

By the same reasoning, we get an identical bound $\abs{\tau_{3}(\psi(c))} \leq C_3/2$, and hence
\begin{align*}
\abs{\tau_3(\psi(b) - \psi(c))} &\leq \abs{\tau_3(\psi(b))} + \abs{\tau_3(\psi(c))} \leq C_3 \doteq 2.1758.
\end{align*}
\end{proof}

\subsection{Intersection with the lattice}
\label{subsec:intersec}

In this section, we start to use the fact that
the blocks $b_0$ and $c_0$ have the same length and sum. This is true if and only if 
\begin{align*} 
(1, 1, 1, 1) \cdot (\psi(b_0) - \psi(c_0)) &= 0 \\
(0, 1, 3, 4) \cdot (\psi(b_0) - \psi(c_0)) &= 0.
\end{align*}
Define a lattice of integer points that meet these conditions, as follows:
\begin{align*}
\mathfrak{L} &:= \{ \mathbf{v} \in \Zee^4 \ : \ (1,1,1,1) \cdot \mathbf{v} = 0
\text{ and } (0,1,3,4) \cdot \mathbf{v} = 0 \}.
\end{align*}
By construction, $\psi(b_0) - \psi(c_0)$ belongs to $\mathfrak{L}$.
If $\mathbf{v} = (v_1, v_2, v_3, v_4) \in \mathfrak{L}$,
then $(0,1,3,4) \cdot \mathbf(v) = 0$ implies $v_2 = -3v_3 - 4v_4$.
When we combine this with the equation $(1,1,1,1) \cdot \mathbf{v} = 0$
we get $v_1 = 2v_3 + 3v_4$. It follows that any vector in
$\mathfrak{L}$ is of the form
\begin{align*}
(2,-3,1,0)v_3 + (3,-4,0,1)v_4
\end{align*}
where $v_3, v_4 \in \Zee$. Since the vectors $(1,-2,2,-1)$ and
$(1,-1,-1,1)$ are an orthogonal basis for this lattice, every vector
in $\mathfrak{L}$ can be written as an integer linear combination of
these vectors.

Since $\psi(b_0) - \psi(c_0)$ is in $\mathfrak{L}$ we can write it in
this form. From Theorem~\ref{thm34} we know that $\abs{\tau_3(\psi(b_0) - \psi(c_0))}$ is
bounded, so we might expect $\psi(b_0) - \psi(c_0)$ to be a short
vector in the lattice. But this requires proof.

\begin{proposition}
Let $\mathbf{v} = m(1,-2,2,-1) + n(1,-1,-1,1)$ be an arbitrary vector in $\mathfrak{L}$. Then
\begin{align}
\abs{\tau_{3}(\mathbf{v})} &\geq \alpha \abs{m} \label{eq22}\\
\abs{\tau_{3}(\mathbf{v})} &\geq \beta \abs{n} .  \label{eq33}
\end{align}
where $\alpha \doteq 1.4914$ and $\beta \doteq 2.1657$ are constants. 
\label{prop:bound}
\end{proposition}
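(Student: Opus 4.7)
The plan is to reduce both inequalities to a two-dimensional geometric estimate. By linearity of $\tau_3$, setting $\mathbf{u}_1 := \tau_3(1,-2,2,-1)$ and $\mathbf{u}_2 := \tau_3(1,-1,-1,1)$ gives $\tau_3(\mathbf{v}) = m\,\mathbf{u}_1 + n\,\mathbf{u}_2$, so the two inequalities become lower bounds on $|m\mathbf{u}_1 + n\mathbf{u}_2|$ in terms of $|m|$ and $|n|$ separately. The values $\mathbf{u}_1, \mathbf{u}_2 \in \Cee$ can be computed directly from the third row of $Q^{-1}$ tabulated earlier in the paper.

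View $\Cee$ as $\Reals^2$. A direct numerical check shows that $\mathbf{u}_1$ and $\mathbf{u}_2$ are not real scalar multiples of one another, equivalently $\Im(\bar{\mathbf{u}}_1 \mathbf{u}_2) \neq 0$. Under this hypothesis, a standard planar geometry argument (projecting $m\mathbf{u}_1$ onto the line orthogonal to $\Reals \mathbf{u}_2$, and recognizing $|\Im(\bar{\mathbf{u}}_1 \mathbf{u}_2)|$ as the area of the parallelogram spanned by $\mathbf{u}_1$ and $\mathbf{u}_2$) yields
$$ |m\mathbf{u}_1 + n\mathbf{u}_2| \;\geq\; |m| \cdot d(\mathbf{u}_1,\, \Reals\mathbf{u}_2) \;=\; |m| \cdot \frac{|\Im(\bar{\mathbf{u}}_1 \mathbf{u}_2)|}{|\mathbf{u}_2|} $$
for all real $m,n$ (the integrality of $m,n$ is not needed for the bound). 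This establishes \eqref{eq22} with $\alpha := |\Im(\bar{\mathbf{u}}_1 \mathbf{u}_2)|/|\mathbf{u}_2|$; running the same argument with the roles of $\mathbf{u}_1$ and $\mathbf{u}_2$ swapped gives \eqref{eq33} with $\beta := |\Im(\bar{\mathbf{u}}_1 \mathbf{u}_2)|/|\mathbf{u}_1|$.

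Plugging in the numerical approximation of $Q^{-1}$ yields, to first-order precision, $\mathbf{u}_1 \approx 1.2508 + 0.9025 i$ and $\mathbf{u}_2 \approx 0.8035 - 2.0907 i$, hence $|\Im(\bar{\mathbf{u}}_1 \mathbf{u}_2)| \approx 3.3402$, $|\mathbf{u}_1| \approx 1.5424$, $|\mathbf{u}_2| \approx 2.2398$, and therefore $\alpha \approx 1.4914$ and $\beta \approx 2.1657$, as claimed. The only mildly delicate step — anticipated by the paper's footnote on floating-point approximations — is ensuring that the numerical values of $\tau_3$ are tight enough to make both $\Im(\bar{\mathbf{u}}_1 \mathbf{u}_2) \neq 0$ provable and the stated values of $\alpha, \beta$ rigorous lower bounds (rather than upper bounds obtained by rounding the wrong way). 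This is a routine interval-arithmetic check, or alternatively can be done symbolically using the minimal polynomial $X^4 - X^3 - 2X^2 + 2X - 1$ of $\lambda_3$; this verification is what I expect to be the main — but only — real obstacle.
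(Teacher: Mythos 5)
Your proposal is correct and follows essentially the same route as the paper: the paper also writes $\tau_3(\mathbf{v}) = m\,\tau_3(1,-2,2,-1) + n\,\tau_3(1,-1,-1,1)$, factors out $m\,\tau_3(1,-1,-1,1)$, and bounds the modulus below by the imaginary part, which is exactly your projection-onto-the-line-orthogonal-to $\Reals\,\tau_3(1,-1,-1,1)$ argument and yields the same constant $\alpha = \abs{\mathrm{Im}(\overline{\tau_3(1,-2,2,-1)}\,\tau_3(1,-1,-1,1))}/\abs{\tau_3(1,-1,-1,1)} \doteq 1.4914$ (the paper proves only the first inequality and leaves the symmetric bound $\beta$ to the reader, as you do by swapping roles). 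Your numerical values also agree with the paper's.
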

\begin{proof}
We prove only Eq.~(\ref{eq22}), leaving Eq.~(\ref{eq33}) to the reader.
The equation holds trivially when $m = 0$, so assume $m \neq 0$. We also use the fact that $\tau_{3}(1, -1, -1, 1) \doteq 0.80357 - 2.09082i$ is nonzero. Then
\begin{align*}
\abs{\tau_3 ( \mathbf{v} )} &= \abs{m \tau_3 (1, -2, 2, -1) + n \tau_3 (1, -1, -1, 1)} \\
&= \abs{m} \abs{\tau_3 (1,-1,-1,1)} \abs{\frac{n}{m} + \frac{\tau_3 (1,-2,2,-1)}{\tau_3 (1,-1,-1,1)}} \\
&\geq \abs{m} \abs{\tau_3 (1,-1,-1,1)} \abs{ \text{Im}\left(\frac{n}{m} + \frac{\tau_3 (1,-2,2,-1)}{\tau_3 (1,-1,-1,1)} \right)} \\
&\geq \abs{m} \abs{\tau_3 (1,-1,-1,1)} \abs{\text{Im}\left(\frac{\tau_3 (1,-2,2,-1)}{\tau_3 (1,-1,-1,1)} \right)} .
\end{align*}
If we take 
\begin{align*}
\alpha &= \abs{\tau_3 (1,-1,-1,1)} \abs{\text{Im}\left(\frac{\tau_3 (1,-2,2,-1)}{\tau_3 (1,-1,-1,1)} \right)} \doteq 1.4914,
\end{align*}
then $\abs{\tau_3(\mathbf{v})} \geq \alpha \abs{m}$, completing the proof. 
\end{proof}

Recall from last section that $\abs{\tau_3 (\psi(b_0) - \psi(c_0))} \leq 2.1758$. 
By Proposition~\ref{prop:bound}, if $\tau_3 (\psi(b_0) - \psi(c_0)) = m(1,-2,2,-1) + n(1,-1,-1,1)$ then
\begin{displaymath}
1.4914 \abs{m} \leq \abs{\tau_3 (\psi(b_0) - \psi(c_0))} \leq 2.1758,
\end{displaymath} 
so
$$|m| \leq \frac{2.1758}{1.4914} \dot= 1.4589.$$
Therefore $m \in \{ -1, 0, 1 \}$. Similarly, we deduce that $n \in \{ -1, 0, 1 \}$. 

Table~\ref{tab:lattice} lists all 9 possible vectors with $m, n \in
\{-1, 0, 1\}$. We see that only three of them, $(0,0,0,0)$,
$(1,-2,2,-1)$ and $(-1,2,-2,1)$, satisfy the constraint
$\abs{\tau_3(\psi(b_0) - \psi(c_0))} \leq 2.1758$, so $\psi(b_0) -
\psi(c_0)$ must be one of these vectors.

\begin{table}[H]
\begin{center}
\begin{tabular}{|c|c|c|c|c|c|} 
\hline
$v$ & $m$ & $n$ & $\abs{\tau_1(v)}$ & $\abs{\tau_2(v)}$ & $\abs{\tau_3(v)}$ \\ 
\hline $(0,0,0,0)$ & 0 & 0  & 0 & 0 & 0 \\
$(1,-2,2,-1)$ & $1$ & $0$   & 0.63278 & 1.51365 & 1.5425 \\ 
$(-1,2,-2,1)$ & $-1$ & $0$  & 0.63278 & 1.51365 & 1.5425 \\ 
$(1,-1,-1,1)$ & $0$ & $1$   & 0.21770 & 0.62031 & 2.23992 \\ 
$(-1,1,1,-1)$ & $0$ & $-1$  & 0.21770 & 0.62031 & 2.23992 \\ 
$(2,-3,1,0)$ & $1$ & $1$    & 0.41508 & 2.13396 & 2.37327 \\ 
$(-2,3,-1,0)$ & $-1$ & $-1$ & 0.41508 & 2.13396 & 2.37327 \\
$(0,1,-3,2)$ & $-1$ & $1$   & 0.85048 & 0.89334 & 3.02667 \\
$(0,-1,3,-2)$ & $1$ & $-1$  & 0.85048 & 0.89334 & 3.02667 \\ 
\hline
\end{tabular} \end{center} 
\caption{Short lattice vectors ordered by $\abs{\tau_{3}(v)}$.}
\label{tab:lattice}
\end{table}

\subsection{Bounding two more coordinates}

\begin{theorem}
\label{thm12}
Let $b$ and $c$ be blocks (not necessarily consecutive)
with the same length and same sum,
and let $\{ b_i \}_{i=0}^{\infty}$ and
$\{ c_i \}_{i=0}^{\infty}$ be their ancestral sequences respectively. Then
\begin{align*}
\abs{\tau_1(\psi(b_i) - \psi(c_i))} &\leq C_1 \\
\abs{\tau_2(\psi(b_i) - \psi(c_i))} &\leq C_2
\end{align*}
for all $i$, where
\begin{align*}
C_1 &= \frac{2 \abs{\tau_1(0,0,0,1)}}{\abs{\lambda_1} - 1} \doteq 1.9032 \\
C_2 &= \frac{2 \abs{\tau_2(1,0,0,-1)}}{\abs{\lambda_2} - 1} \doteq 2.9818.
\end{align*}
\end{theorem}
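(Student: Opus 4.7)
The plan is induction on $i$ based on a ``forward'' (root-to-leaf) recurrence for the eigencoordinates $\tau_j(v_i)$, where $v_i := \psi(b_i) - \psi(c_i)$. First I would derive the recurrence. Applying Lemma~\ref{sigmarec} to the two positions delimiting $b_i$ yields $\psi(b_i) = M\psi(b_{i+1}) + \delta^b_i$, where $\delta^b_i$ is the difference of the two $\mathcal{T}$-edge labels sitting above the endpoints of $b_i$; analogously for $c$. Subtracting gives
\[
v_i = M v_{i+1} + \epsilon_i, \qquad \epsilon_i = \delta^b_i - \delta^c_i,
\]
and the four edge labels appearing in $\epsilon_i$ all lie in $\{\varepsilon, 0, 4\}$ (the only labels visible on $\mathcal{Q}$). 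Applying $\tau_j$ and using $\tau_j \circ M = \lambda_j \tau_j$, we solve for the next level:
\[
\tau_j(v_{i+1}) = \lambda_j^{-1}\bigl(\tau_j(v_i) - \tau_j(\epsilon_i)\bigr).
\]
The critical point is that $|\lambda_1|, |\lambda_2| > 1$, so this recurrence is a strict contraction in the $\tau_j$-coordinate, and it runs in the opposite direction from the one used in Theorem~\ref{thm34}, where $|\lambda_3|, |\lambda_4| < 1$ made a forward tail series convergent.

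Next I would bound $|\tau_j(\epsilon_i)| \leq 2 B_j$ by the triangle inequality, where $B_j := \max\{\,|\tau_j(\psi(s) - \psi(t))| : s, t \in \{\varepsilon, 0, 4\}\,\}$. A direct computation over this finite list confirms that $B_1 = |\tau_1(0,0,0,1)|$ (realized by $s = 4, t = \varepsilon$) and $B_2 = |\tau_2(1,0,0,-1)|$ (realized by $s = 0, t = 4$); an algebraic check then shows that $C_j$ is the unique fixed point of the resulting recurrence bound, i.e.\ $|\lambda_j|^{-1}(C_j + 2B_j) = C_j$.

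The proof then proceeds by induction on $i$. For the base case $i = 0$, the previous subsection pins $v_0$ to one of the three lattice vectors $(0,0,0,0)$, $(1,-2,2,-1)$, $(-1,2,-2,1)$; Table~\ref{tab:lattice} gives $|\tau_1(v_0)| \leq 0.63278 < C_1$ and $|\tau_2(v_0)| \leq 1.51365 < C_2$ in every case. For the inductive step, assuming $|\tau_j(v_i)| \leq C_j$, the recurrence combined with $|\tau_j(\epsilon_i)| \leq 2B_j$ yields $|\tau_j(v_{i+1})| \leq |\lambda_j|^{-1}(C_j + 2B_j) = C_j$, closing the induction. The conceptual point worth flagging is that an individual uniform bound of the form $|\tau_1(\psi(b_i))| \leq \text{const}$ is genuinely hopeless: with $|\lambda_1| > 1$, pushing the parent recurrence up toward the root amplifies this coordinate, and there is no uniform control on $\psi(b_i)$ alone. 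The theorem is intrinsically about the difference, and the argument only works because the lattice reduction of the previous subsection places $v_0$ comfortably inside the target ball of radius $C_j$.
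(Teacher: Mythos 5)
Your proposal is correct and follows essentially the same route as the paper: the same parent recurrence $v_i = Mv_{i+1} + \epsilon_i$ with edge labels in $\{\varepsilon,0,4\}$, the same induction with base case given by the three short lattice vectors from the preceding subsection, and the same contraction step using $|\lambda_1|,|\lambda_2|>1$, with $C_j = 2B_j/(|\lambda_j|-1)$ as the fixed point of the bound. The only cosmetic difference is that you treat $j=1,2$ uniformly and make the fixed-point identity explicit, whereas the paper writes out $j=2$ and leaves $j=1$ to the reader.
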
 
\begin{proof}
Our proof is inductive, starting at $i = 0$. In the last section, we argued that $\psi(b_0) - \psi(c_0)$ is either $(0,0,0,0)$, $(1,-2,2,-1)$ or $(-1,2,-2,1)$. We see from table~\ref{tab:lattice} that both inequalities are satisfied for these vectors. 

Otherwise, 
\begin{align*}
\abs{\tau_2(\psi(b_{i+1}) - \psi(c_{i+1}))} &= \abs{\tau_2\left(  M^{-1}(\psi(b_{i}) - \psi(c_i) - \delta_{i} + \delta_{i}') \right)} \\
&= \abs{\lambda_2}^{-1} \abs{\tau_2(\psi(b_{i}) - \psi(c_i) - \delta_{i} + \delta_{i}')} \\
&\leq \abs{\lambda_2}^{-1} \left(  \abs{\tau_2(\psi(b_{i}) - \psi(c_i))} + \abs{\tau_2(\delta_{i})} + \abs{\tau_2(\delta_{i}')} \right) .
\end{align*}
As in the proof of Theorem~\ref{thm34}, the quantities
$\abs{\tau_2(\delta_i)}$ and $\abs{\tau_2(\delta_i')}$ are bounded by
the maximum over all $u, v \in \{ \varepsilon, 0, 4 \}$ of
$\abs{\tau_2(u-v)}$, which turns out to be $\abs{\tau_2(1,0,0,-1)} \doteq
0.75301$. By induction, we know that $\abs{\tau_2(\psi(b_{i}) -
\psi(c_i))} \leq \frac{2\abs{\tau_2(1,0,0,-1)}}{\abs{\lambda_2} - 1}$.
Therefore we have
\begin{align*}
\abs{\tau_2(\psi(b_{i+1}) - \psi(c_{i+1}))} &\leq \abs{\lambda_2}^{-1}
\left(  \frac{2\abs{\tau_2(1,0,0,-1)}}{\abs{\lambda_2} - 1} +
2\abs{\tau_2(1,0,0,-1)} \right) =
\frac{2\abs{\tau_2(1,0,0,-1)}}{\abs{\lambda_2} - 1} .
\end{align*}
This completes the induction, and the proof of the second inequality. The proof of the first inequality is virtually identical, so it is left to the reader.
\end{proof}

We have now bounded all four eigencoordinates of $\psi(b_{i}) - \psi(c_i)$. 
In other words, $\tau( \psi(b_{i}) - \psi(c_i) )$ has bounded length.
Since $\tau$ is invertible,
$\psi(b_{i}) - \psi(c_i)$ will also have bounded length. 

\subsection{Finding $\mathfrak{U}$}

Define the set
\begin{align*}
\mathfrak{U} &= \{ \mathbf{x} \in \Zee^4 \ : \  \abs{\tau_i(\mathbf{x})} \leq C_i 
\text{ for $1 \leq i \leq 4$ } \}
\end{align*}
where $C_1, C_2, C_3, C_4$ are the constants in Theorems~\ref{thm34} and \ref{thm12}. These 
theorems prove that $\psi(b_i) - \psi(c_i)$ is an element of $\mathfrak{U}$. 

\begin{proposition}
If $\mathbf{x} \in \mathfrak{U}$ then $\abs{\mathbf{x}} \leq 6.28$. 
\label{prop:ten}
\end{proposition}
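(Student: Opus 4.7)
The plan is to leverage the eigendecomposition $M = Q\Lambda Q^{-1}$: since $\tau$ is left-multiplication by $Q^{-1}$, every vector $\mathbf{x}\in\Cee^4$ satisfies $\mathbf{x} = Q\tau(\mathbf{x})$, so $\mathbf{x}$ is completely recoverable from its four eigencoordinates $\tau_1(\mathbf{x}),\ldots,\tau_4(\mathbf{x})$. The strategy is to first bound $\norm{\tau(\mathbf{x})}$ using the definition of $\mathfrak{U}$, then push this through $Q$ using the spectral norm.

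For the first step, using $C_4 = C_3$ (as noted in the proof of Theorem~\ref{thm34}),
\begin{align*}
\norm{\tau(\mathbf{x})}^2 = \sum_{j=1}^{4} \abs{\tau_j(\mathbf{x})}^2 \leq C_1^2 + C_2^2 + 2C_3^2 \doteq 21.98,
\end{align*}
so $\norm{\tau(\mathbf{x})} \doteq 4.688$. For the second step, submultiplicativity of the spectral norm applied to the identity $\mathbf{x} = Q\tau(\mathbf{x})$ gives
\begin{align*}
\norm{\mathbf{x}} \leq \norm{Q}_{op}\,\norm{\tau(\mathbf{x})},
\end{align*}
where $\norm{Q}_{op} = \sqrt{\lambda_{\max}(Q^{*}Q)}$ is the largest singular value of $Q$. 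A direct numerical diagonalization of the Hermitian matrix $Q^{*}Q$ yields $\norm{Q}_{op} \doteq 1.34$, so $\norm{\mathbf{x}} \leq 1.34 \cdot 4.688 \leq 6.28$.

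The only real obstacle is numerical precision: one must compute $\norm{Q}_{op}$ accurately enough to certify the stated bound. It is worth noting that a naive triangle-inequality estimate using the unit-norm eigenvectors, namely writing $\mathbf{x} = \tau_1(\mathbf{x}) q_1 + \tau_2(\mathbf{x}) q_2 + 2\,\Re(\tau_3(\mathbf{x}) q_3)$ to get $\norm{\mathbf{x}} \leq C_1 + C_2 + 2C_3 \doteq 9.24$, is too weak; the improvement comes from the $\ell^2$ (spectral norm) bound, which is genuinely necessary because the columns of $Q$ are far from mutually orthogonal. An equivalent route, avoiding explicit computation of $\norm{Q}_{op}$, is to maximize $\norm{\mathbf{x}}^2 = \tau(\mathbf{x})^{*} Q^{*}Q\, \tau(\mathbf{x})$ directly over the box $\abs{\tau_j(\mathbf{x})} \leq C_j$ via Lagrange multipliers; either way the key ingredient is the same.
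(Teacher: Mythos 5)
Your proof is correct and follows essentially the same route as the paper: bound $\norm{\tau(\mathbf{x})}^2 \leq C_1^2+C_2^2+2C_3^2 \doteq 21.98$ and then convert back to $\norm{\mathbf{x}}$ by a spectral (singular value) estimate. The paper phrases the second step via the smallest eigenvalue $\mu \doteq 0.55713$ of $\tau^{*}\tau$, which is exactly the reciprocal of your $\norm{Q}_{op}^2$ (indeed $1/\sqrt{0.55713} \doteq 1.34$), so the two computations are the same bound viewed from opposite sides.
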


\begin{proof}
Since $\mathbf{x}$ is in $\mathfrak{U}$, there are bounds on the components of
$\tau(\mathbf{x})$, and therefore on its length. We have
\begin{align*}
\abs{\tau(\mathbf{x})}^2 &= \sum_{i=1}^{4} \abs{\tau_i(\mathbf{x})}^2 \\
&\leq 1.9032^2 + 2.9818^2 + 2.1758^2 + 2.1758^2 \\
&\doteq 21.98 .
\end{align*}
Suppose $A$ is an arbitrary matrix and every eigenvalue $\lambda$ of $A^{*} A$
lies between $\mu_{\min}$ and $\mu_{\max}$, where $A^{*}$ denotes the conjugate
transpose of $A$. Under these conditions, a classical theorem in 
linear algebra (e.g., \cite[pp.\ 415--420]{Atkinson:1978})
gives the inequality
\begin{align*}
\mu_{\min} \abs{x}^2 &\leq \abs{Ax}^2 \leq \mu_{\max} \abs{x}^2  .
\end{align*}
The smallest eigenvalue of $\tau^{*} \tau$ is $\mu \doteq
0.55713$, so from the lower bound of the theorem we get
$$\abs{\mu} \abs{\mathbf{x}}^2 \leq \abs{\tau(\mathbf{x})}^2 $$
and so
$$ \abs{\mathbf{x}}^2 \leq \frac{\abs{\tau(\mathbf{x})}^2}{\abs{\mu}} \doteq \frac{21.98}{0.55713} \doteq 39.455 .$$
Therefore $\abs{\mathbf{x}} \leq 6.28$.
\end{proof}

Proposition~\ref{prop:ten} tells us that we can enumerate vectors in
$\mathfrak{U}$ by listing all integer vectors of length less than $6.28$ and
discarding the ones that fail to satisfy the inequalities
$\abs{\tau_i(\mathbf{x})} \leq C_i$ for each $i$. Our computer program for enumerating 
$\mathfrak{U}$ lists 503 vectors.  There is not enough space to
reproduce the entire list here, but it can be downloaded from
{\tt http://www.student.cs.uwaterloo.ca/\char'176l3schaef/sumcube/ }.

\section{Main graph} \label{sec:graph}

\subsection{Graph products}

Recall the tree $\mathcal{T}$ with vertices representing positions in
$\mathbf{w}$. We are interested in finding an analogous graph for triple
blocks. Since each triple block is delimited by four positions, so it is
natural to consider the graph $\mathcal{T} \times \mathcal{T} \times
\mathcal{T} \times \mathcal{T} = \mathcal{T}^4$ where the product of two graphs
is defined below. 
\begin{definition}
Let $\mathcal{G}_1$ and $\mathcal{G}_2$ be graphs. Define the tensor product
$\mathcal{G}_1 \times \mathcal{G}_2$ where
\begin{align*}
V(\mathcal{G}_1 \times \mathcal{G}_2) &= V(\mathcal{G}_1) \times V(\mathcal{G}_2) \\
E(\mathcal{G}_1 \times \mathcal{G}_2) &= E(\mathcal{G}_1) \times E(\mathcal{G}_2)
\end{align*}
Further, if $e_1$ and $e_2$ are labelled by $\ell_1$ and $\ell_2$ respectively
then the edge $(e_1, e_2)$ is labelled $(\ell_1, \ell_2)$.
\end{definition}
Note that every vertex $\mathbf{p} = (p_1, p_2, p_3, p_4)$ in $\mathcal{T}^4$
has a unique parent $\parent{\mathbf{p}}$ because each coordinate $p_i$ has a
unique parent $\parent{p_i}$ in $\mathcal{T}$. Thus, if a vertex $v$ in
$\mathcal{T}^4$ delimits a triple block $b_1 b_2 b_3$ then the parent of $v$
delimits the parent of $b_1 b_2 b_3$. 

Since each node in $\mathcal{T}^4$ has a unique parent, we define the ancestral
sequence of a node $v$ to be the sequence of parents starting at the node. An
ancestral sequence in $\mathcal{T}^4$ gives us four ancestral sequences in
$\mathcal{T}$ for the four coordinates of $\mathcal{T}^4$. Every ancestral
sequence in $\mathcal{T}$ eventually reaches 0, the root, and remains there
because $0$ is its own parent. It follows that every ancestral sequence in
$\mathcal{T}^4$ eventually reaches $(0,0,0,0)$. Also note that any cycle in
$\mathcal{T}^4$ induces four cycles in $\mathcal{T}$. Since the loop from $0$
to itself is the only cycle in $\mathcal{T}$, it follows that the only cycle in
$\mathcal{T}^4$ is a loop from $(0,0,0,0)$ to itself. We conclude that
$\mathcal{T}^4$ is a tree, except for a loop at $(0,0,0,0)$, the root of the
tree. 

Earlier we saw that the graph homomorphism $\zeta : \mathcal{T} \rightarrow
\mathcal{Q}$ was child-bijective. By Proposition~\ref{bijectwalk} there is a
certain relationship between walks in $\mathcal{T}$ and walks in $\mathcal{Q}$.
Define a graph homomorphism $\xi : \mathcal{T}^4 \rightarrow \mathcal{Q}^4$
that sends $(p_1, p_2, p_3, p_4)$ to $(\zeta(p_1), \zeta(p_2), \zeta(p_3),
\zeta(p_4))$. By the following proposition, $\xi$ is a child-bijection. 
\begin{proposition}
If $f_1 \colon \mathcal{A}_1 \rightarrow \mathcal{B}_1$ and $f_2 \colon
\mathcal{A}_2 \rightarrow \mathcal{B}_2$ are child-bijections then $f \colon
\mathcal{A}_1 \times \mathcal{A}_2 \rightarrow \mathcal{B}_1 \times
\mathcal{B}_2$ sending $(x, y)$ to $(f_1(x), f_2(y))$ is also a
child-bijection. 
\end{proposition}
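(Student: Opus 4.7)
The plan is to unpack the definition of child-bijectivity and verify that it lifts from the two factors to the tensor product componentwise. The proof is essentially bookkeeping with the definition of the tensor product, so I do not expect any real obstacle here; the only thing to be careful about is to check both that $f$ is a graph homomorphism (including preserving edge labels) and that the induced map on children is a bijection.

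First I would show that $f$ is a labelled digraph homomorphism. Suppose $((x_1,x_2),(x_1',x_2'))$ is an edge in $\mathcal{A}_1 \times \mathcal{A}_2$ with label $(\ell_1,\ell_2)$. By the definition of the tensor product, $(x_1,x_1')$ is an edge of $\mathcal{A}_1$ with label $\ell_1$ and $(x_2,x_2')$ is an edge of $\mathcal{A}_2$ with label $\ell_2$. Since $f_1$ and $f_2$ are labelled graph homomorphisms, $(f_1(x_1),f_1(x_1'))$ is an edge of $\mathcal{B}_1$ labelled $\ell_1$ and $(f_2(x_2),f_2(x_2'))$ is an edge of $\mathcal{B}_2$ labelled $\ell_2$. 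Applying the definition of the tensor product in the other direction, $((f_1(x_1),f_2(x_2)),(f_1(x_1'),f_2(x_2')))$ is an edge of $\mathcal{B}_1 \times \mathcal{B}_2$ labelled $(\ell_1,\ell_2)$, which is exactly $(f(x_1,x_2),f(x_1',x_2'))$ with the correct label.

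Next I would verify child-bijectivity. Fix a vertex $(x_1,x_2) \in \mathcal{A}_1 \times \mathcal{A}_2$. Directly from the tensor product definition, the set of children of $(x_1,x_2)$ is exactly the Cartesian product of the children of $x_1$ in $\mathcal{A}_1$ with the children of $x_2$ in $\mathcal{A}_2$; similarly, the children of $f(x_1,x_2) = (f_1(x_1),f_2(x_2))$ are the Cartesian product of the children of $f_1(x_1)$ and the children of $f_2(x_2)$. By hypothesis, $f_1$ restricts to a bijection between the children of $x_1$ and those of $f_1(x_1)$, and $f_2$ restricts to a bijection between the children of $x_2$ and those of $f_2(x_2)$. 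The product of two bijections is a bijection between the respective Cartesian products, and this product map is exactly the restriction of $f$ to children of $(x_1,x_2)$. Hence $f$ is child-bijective, completing the proof.
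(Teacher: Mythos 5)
Your proof is correct; the paper simply leaves this proposition as an exercise, and your componentwise argument (checking that $f$ is a label-preserving homomorphism and that the children of $(x_1,x_2)$ and of $f(x_1,x_2)$ are the Cartesian products of the respective children sets, so the product of the two bijections is the required bijection) is exactly the intended bookkeeping.
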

\begin{proof}
Exercise.
\end{proof}
Since $\xi$ is a child-bijection, Proposition~\ref{bijectwalk} says that if we
fix some $v$ in $\mathcal{T}^4$ then there is a bijection between walks in
$\mathcal{T}^4$ starting at $v$ and walks in $\mathcal{Q}^4$ starting at
$\xi(v)$. Furthermore, the bijection sends a walk $v_0 \cdots v_\ell$ to a walk
$\xi(v_0) \cdots \xi(v_\ell)$, and the bijection preserves edge labels.

\subsection{Augmenting $\mathcal{Q}$ with block vectors}

We know that a triple block $b_1 b_2 b_3$ delimited by $(p_1, p_2, p_3, p_4)$
will give us a walk from $(0,0,0,0)$ to $(p_1, p_2, p_3, p_4)$ in
$\mathcal{T}^4$. Then we know the Parikh vector of each block, 
\begin{align*}
\psi(b_1) &= \sigma(p_2) - \sigma(p_1) \\
\psi(b_2) &= \sigma(p_3) - \sigma(p_2) \\
\psi(b_3) &= \sigma(p_4) - \sigma(p_3).
\end{align*}
If $b_1 b_2 b_3$ is an additive cube, then
\begin{align*}
\psi(b_2) - \psi(b_1) &= \sigma(p_3) - 2\sigma(p_2) + \sigma(p_1) \\
\psi(b_3) - \psi(b_2) &= \sigma(p_4) - 2\sigma(p_3) + \sigma(p_2)
\end{align*}
must both belong to $\mathfrak{L}$ and, since the blocks are nonempty, we have
$p_1 < p_2 < p_3 < p_4$. The point is that we can decide whether a vertex in
$\mathcal{T}^4$ corresponds to an additive cube. 

The problem with $\xi$ is that we cannot tell from $\xi(p_1, p_2, p_3, p_4)$
whether $(p_1, p_2, p_3, p_4)$ delimits an additive cube, because the vertices in
$\mathcal{Q}^4$ do not contain any information about the Parikh vectors. On the
other hand, suppose we project the walk from $(0,0,0,0)$ to $(p_1, p_2, p_3,
p_4)$ into $\mathcal{Q}^4$ via $\xi$. Then since mapping walks via $\xi$ is
bijective, we can recover original walk in $\mathcal{T}^4$ from its image in
$\mathcal{Q}^4$. 

Now let there be a walk from $(p_1, \ldots, p_4)$ to $(q_1, \ldots, q_4)$ in
$\mathcal{T}^4$, and suppose we are given the corresponding walk in
$\mathcal{Q}^4$. We have seen that if we have $(p_1, \ldots p_4)$ then we can
reconstruct the entire original walk $\mathcal{T}^4$, including $(q_1, \ldots,
q_4)$. What do we need to know about $(p_1, \ldots, p_4)$ to reconstruct just
$\sigma(q_3) - 2\sigma(q_2) + \sigma(q_1)$? Let us suppose for simplicity that
the walk is length 1, so $(p_1, \ldots, p_4)$ and $(q_1, \ldots, q_4)$ are
adjacent, and let the edge be labelled $(a_1, \ldots, a_4)$. Then by
Lemma~\ref{sigmarec},
\begin{align*}
\sigma(q_3) - 2\sigma(q_2) + \sigma(q_1) &= (M \sigma(p_3) + \psi(a_3)) - 2(M \sigma(p_2) + \psi(a_2)) + (M \sigma(p_1) + \psi(a_1)) \\
&= M (\sigma(p_3) - 2 \sigma(p_2) + \sigma(p_1)) + \psi(a_3) - 2 \psi(a_2) + \psi(a_1),
\end{align*}
so it suffices to know $\sigma(p_3) - 2 \sigma(p_2) + \sigma(p_1)$. Similarly,
we can compute $\sigma(q_4) - 2\sigma(q_3) + \sigma(q_2)$ from $\sigma(p_4) -
2\sigma(p_3) + \sigma(p_2)$. 

Thus, we define the graph $\mathcal{G}$ with vertices \footnote{The vertices of $\mathcal{G}$ resemble \emph{templates} as defined in \cite{Aberkane&Currie&Rampersad:2004}. Templates and $\mathcal{G}$ are different perspectives on the same structure.} in the set
\begin{align*}
V(\mathcal{G}) &= V(\mathcal{Q}^4) \times \Zee^4 \times \Zee^4 = \Sigma^4 \times \Zee^4 \times \Zee^4. 
\end{align*}
There is an edge from $(\mathbf{c}, \mathbf{u}, \mathbf{v})$ to $(\mathbf{c}',
\mathbf{u}', \mathbf{v}')$ if there is an edge $(\mathbf{c}, \mathbf{c}')$ in
$\mathcal{Q}^4$ labelled by $(a_1, a_2, a_3, a_4)$ and the following
consistency equations hold:
\begin{align*}
\mathbf{u}' &= M \mathbf{u} + \psi(a_3) - 2 \psi(a_2) + \psi(a_1) \\
\mathbf{v}' &= M \mathbf{v} + \psi(a_4) - 2 \psi(a_3) + \psi(a_2) .
\end{align*}
To see where these equations come from, define the map $g \colon \mathcal{T}^4
\rightarrow \mathcal{G}$ such that 
\begin{align*}
g(\mathbf{p}) &= (\xi(\mathbf{p}), \sigma(p_3) - 2 \sigma(p_2) + \sigma(p_1), \sigma(p_4) - 2 \sigma(p_3) + \sigma(p_2)).
\end{align*}
Then a triple block $b_1 b_2 b_3$ maps to the four delimiting characters
$w[p_1], w[p_2], w[p_3], w[p_4]$, with the vectors $\psi(b_2) - \psi(b_1)$ and
$\psi(b_3) - \psi(b_2)$. Thus, we see that the two vectors are meant to
represent the differences between pairs of block vectors. We have seen that
given an edge $(\xi(\mathbf{p}), \xi(\mathbf{q}))$ in $\mathcal{Q}^4$, we may
compute $\sigma(q_{i+2}) - 2 \sigma(q_{i+1}) + \sigma(q_{i})$ from
$\sigma(p_{i+2}) - 2 \sigma(p_{i+1}) + \sigma(p_{i})$ for $i = 1, 2$. The
consistency equations simply enforce this relationship, so that
$(g(\mathbf{p}), g(\mathbf{q}))$ is an edge in $\mathcal{G}$ when $(\mathbf{p},
\mathbf{q})$ is an edge in $\mathcal{T}^4$. Thus, $g$ is a graph homomorphism
from $\mathcal{T}^4$ to $\mathcal{G}$. 

\begin{proposition}
The graph homomorphism $g$ is child-bijective.
\end{proposition}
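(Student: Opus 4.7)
The plan is to reduce everything to the already-established child-bijectivity of $\xi \colon \mathcal{T}^4 \to \mathcal{Q}^4$, using the consistency equations that define the edge relation of $\mathcal{G}$. Fix a vertex $\mathbf{p} = (p_1, p_2, p_3, p_4) \in \mathcal{T}^4$ and write $g(\mathbf{p}) = (\xi(\mathbf{p}), \mathbf{u}, \mathbf{v})$ with $\mathbf{u} = \sigma(p_3) - 2\sigma(p_2) + \sigma(p_1)$ and $\mathbf{v} = \sigma(p_4) - 2\sigma(p_3) + \sigma(p_2)$. I need to show that $g$ restricts to a bijection (respecting edge labels) between the children of $\mathbf{p}$ in $\mathcal{T}^4$ and the children of $g(\mathbf{p})$ in $\mathcal{G}$.

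Injectivity will be essentially free: any two children $\mathbf{q}, \mathbf{q}'$ of $\mathbf{p}$ with $g(\mathbf{q}) = g(\mathbf{q}')$ must in particular satisfy $\xi(\mathbf{q}) = \xi(\mathbf{q}')$, and since $\xi$ is already known to be child-bijective, this forces $\mathbf{q} = \mathbf{q}'$. No further work is needed for this direction.

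For surjectivity I would take an arbitrary child $(\mathbf{c}', \mathbf{u}', \mathbf{v}')$ of $g(\mathbf{p})$ in $\mathcal{G}$. By the very definition of the edge set of $\mathcal{G}$, this corresponds to an edge $(\xi(\mathbf{p}), \mathbf{c}')$ in $\mathcal{Q}^4$ with some label $(a_1, a_2, a_3, a_4)$ satisfying $\mathbf{u}' = M\mathbf{u} + \psi(a_3) - 2\psi(a_2) + \psi(a_1)$ and $\mathbf{v}' = M\mathbf{v} + \psi(a_4) - 2\psi(a_3) + \psi(a_2)$. Child-bijectivity of $\xi$ then lifts this uniquely to a child $\mathbf{q} = (q_1, q_2, q_3, q_4)$ of $\mathbf{p}$ in $\mathcal{T}^4$ whose outgoing edge carries the same label $(a_1, a_2, a_3, a_4)$ and satisfies $\xi(\mathbf{q}) = \mathbf{c}'$. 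Applying Lemma~\ref{sigmarec} in each coordinate gives $\sigma(q_i) = M\sigma(p_i) + \psi(a_i)$; substituting into the definition of $g$ shows that the second and third components of $g(\mathbf{q})$ come out to be exactly $\mathbf{u}'$ and $\mathbf{v}'$, by the same consistency equations.

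There is no real obstacle here: the vertex and edge sets of $\mathcal{G}$ were tailor-made so that $g$ would respect them, so the whole argument reduces to chasing definitions. The only points worth checking carefully are that edge labels in the product graph $\mathcal{Q}^4$ are indeed carried componentwise by $\xi$ (so that lifting an edge also lifts its label), and that the consistency equations match the recurrence from Lemma~\ref{sigmarec} term-by-term, which they do by construction.
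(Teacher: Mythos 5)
Your proposal is correct and follows essentially the same route as the paper: use the definition of edges in $\mathcal{G}$ to obtain an edge $(\xi(\mathbf{p}),\mathbf{c}')$ in $\mathcal{Q}^4$, lift it via the child-bijectivity of $\xi$ to a child $\mathbf{q}$ of $\mathbf{p}$, and then check through the consistency equations (equivalently, Lemma~\ref{sigmarec} applied coordinatewise) that $g(\mathbf{q})$ equals the given child. The only difference is cosmetic: you verify the second and third components by direct computation and spell out injectivity explicitly, while the paper invokes that $g$ is a homomorphism and that the consistency equations determine $g(\mathbf{q})$ uniquely.
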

\begin{proof}
Consider an edge $(g(\mathbf{p}), (\mathbf{c}, \mathbf{u}, \mathbf{v}))$ in
$\mathcal{G}$. By the definition of edges in $\mathcal{G}$ there must be an
edge $(\xi(\mathbf{p}), \mathbf{c})$ in $\mathcal{Q}^4$. Since $\xi$ is
child-bijective, there exists a neighbour $\mathbf{q}$ of $\mathbf{p}$ in
$\mathcal{T}^4$ such that $\xi(\mathbf{q}) = \mathbf{c}$. Then $(g(\mathbf{p}),
g(\mathbf{q}))$ is an edge in $\mathcal{G}$. The consistency equations have a
unique solution for $g(\mathbf{q})$ given $g(\mathbf{p})$ and $\mathbf{c}$, so
$(\mathbf{c}, \mathbf{u}, \mathbf{v}) = g(\mathbf{q})$ as required. 
\end{proof}

Since $g$ is child-bijective, Proposition~\ref{bijectwalk} says that any walk
in $\mathcal{G}$ starting at $g(\mathbf{p})$ is the image of some walk from
$\mathbf{p}$ to $\mathbf{q}$ in $\mathcal{T}^4$, and therefore ends in
$g(\mathbf{q})$.

\subsection{Additive cubes and walks in $\mathcal{G}$}

We saw in the previous section that walks in $\mathcal{T}^4$ correspond to
walks in $\mathcal{G}$. If an arbitrary triple block is delimited by
$\mathbf{q}$ then there is a walk in $\mathcal{T}^4$ from $(0,0,0,0)$ to
$\mathbf{q}$ and therefore a walk from $g(0,0,0,0)$ to $g(\mathbf{q})$. 
Unfortunately, a walk from $g(0,0,0,0)$ to $g(\mathbf{q})$ in $\mathcal{G}$
does not necessarily mean that $\mathbf{q}$ delimits a triple block, since we
do not necessarily have $q_1 < q_2 < q_3 < q_4$. We can fix this, but we need a
few propositions.

Note that $\parent{x}$ is an increasing function, so if $p \leq q$ then
$\parent{p} \leq \parent{q}$. By induction, if $p_0 \leq q_0$ then for $i$th
ancestors $p_i$ and $q_i$ we have $p_i \leq q_i$. The contrapositive says that
$p_i < q_i$ implies $p_0 < q_0$, which has an application in the following
lemma. 
\begin{lemma}
Given an ancestral sequence of blocks $\{ b_i \}_{i=0}^{\infty}$ for some
nonempty block $b_0$, there exists $k$ such that $b_j$ is nonempty for all $0
\leq j \leq k$ and $b_j$ is empty for all $k < j$. Furthermore, $b_k$ is the
difference of two proper prefixes of some $\varphi(c)$, so $b_k$ is either $0$
or $4$. 
\end{lemma}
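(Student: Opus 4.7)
The plan is to analyze the ancestral sequence of the two delimiting positions. Write $b_i = {\bf w}[p_i, q_i)$, so that $p_{i+1} = \parent{p_i}$ and $q_{i+1} = \parent{q_i}$, and $b_i$ is nonempty precisely when $p_i < q_i$. The lemma then splits cleanly into proving that such a $k$ exists and describing $b_k$ itself.

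For the existence of $k$, I would argue in two steps. First, the sequences $\{p_i\}$ and $\{q_i\}$ each eventually stabilize at $0$, because $\parent{p} < p$ whenever $p > 0$ while $\parent{0} = 0$; hence for all sufficiently large $i$ we have $p_i = q_i = 0$ and $b_i = \varepsilon$. Second, emptiness is hereditary: if $p_{i+1} = q_{i+1}$, then $p_{i+2} = \parent{p_{i+1}} = \parent{q_{i+1}} = q_{i+2}$, and by induction $b_j$ is empty for all $j \geq i+1$. These two facts let me define $k$ as the largest index with $b_k$ nonempty; by construction $b_j$ is nonempty for $0 \leq j \leq k$ and empty thereafter.

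For the structural description, the fact that $b_{k+1}$ is empty says that $p_k$ and $q_k$ share a common parent, call it $t$. By definition of a child, both $p_k$ and $q_k$ lie in $[\eta(t), \eta(t+1))$, so the words ${\bf w}[\eta(t), p_k)$ and ${\bf w}[\eta(t), q_k)$ are two proper prefixes of ${\bf w}[\eta(t), \eta(t+1)) = \varphi({\bf w}[t])$. Since they are prefixes of a common word they are comparable, and by construction appending $b_k$ to the shorter yields the longer one; this is the sense in which $b_k$ is the difference of two proper prefixes of $\varphi(c)$, with $c = {\bf w}[t]$.

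A small case check then closes the argument. Since $|\varphi(c)| \leq 2$ for every $c \in \Sigma$, the proper prefixes of $\varphi(c)$ are just $\varepsilon$ and at most one single letter. For $c = 3$ we have $\varphi(3) = 1$, whose only proper prefix is $\varepsilon$, which would force $p_k = q_k$ and contradict nonemptiness of $b_k$; so $c \neq 3$. In the remaining cases $\varphi(0) = 03$, $\varphi(1) = 43$, $\varphi(4) = 01$, and the unique nonempty difference of two proper prefixes is $0$ (when $c \in \{0, 4\}$) or $4$ (when $c = 1$). I do not anticipate any serious obstacle; the whole argument is just monotonicity of $\parent{\cdot}$ together with inspection of the four images of $\varphi$.
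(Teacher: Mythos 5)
Your proof is correct and follows essentially the same route as the paper: track the ancestral sequences of the two delimiting positions, use the eventual stabilization at $0$ together with monotonicity of $\parent{\cdot}$ (your hereditary-emptiness induction is just the paper's monotonicity argument phrased forward rather than backward) to isolate the last nonempty block $b_k$, and then use the common parent of $p_k$ and $q_k$ to realize $b_k$ as the difference of two proper prefixes of $\varphi(c)$, whence $b_k \in \{0,4\}$ by inspecting the four images. No gaps.
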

\begin{proof}
Suppose $b_0 = w[p_0, q_0)$ and let $\{ p_i \}_{i=0}^{\infty}$ and $\{ q_i
\}_{i=0}^{\infty}$ be the corresponding ancestral sequences for $p_0$ and
$q_0$. Recall that the ancestral sequence of a block $b_0$ is the sequence of
parents of $b_0$, so
\begin{align*}
b_i = \parent{b_{i-1}} = {\bf w}[\parent{p_{i-1}}, \parent{q_{i-1}}) = {\bf w}[p_i, q_i).
\end{align*}
An ancestral sequence of positions eventually reaches 0 so there exists some
$n_0 \in \Enn$ such that $p_n = q_n = 0$, and hence $b_n$ is empty, for all $n
> n_0$. Thus, we can take $b_k$ to be the last nonempty block because there are
only finitely many nonempty blocks. By definition, $b_j$ is empty for all $j >
k$. Since $b_k$ is nonempty we have $p_k < q_k$, and thus $p_j < q_k$ for any
$j \leq k$ by the discussion above. It follows that $b_j$ is nonempty for any
$j \leq k$. 

By the definition of parents, $\eta(q_{k+1}) \leq q_k < \eta(q_k)$. It follows
that $y = {\bf w}[\eta(q_{k+1}), q_k)$ is a proper prefix of ${\bf
w}[\eta(q_{k+1}), \eta(q_{k+1} + 1)) = \varphi(w[q_{k+1}])$. Similarly, $x =
{\bf w}[\eta(p_{k+1}), p_k)$ is a proper prefix of ${\bf w}[\eta(p_{k+1}),
\eta(p_{k+1} + 1)) = \varphi(w[p_{k+1}])$. But $p_{k+1} = q_{k+1}$ because
$b_{k+1}$ is empty, so $\eta(p_{k+1}) = \eta(q_{k+1})$ and the prefixes $x$ and
$y$ start at the same position and $x b_k = y$. For our morphism, we have seen
that the proper prefixes are $\{ \varepsilon, 0, 4 \}$. Then $b_k$ is a suffix
of some proper prefix, and since $b_k$ is nonempty we have $b_k \in \{ 0, 4
\}$. 
\end{proof}

Suppose we have a triple block delimited by $\mathbf{p}$. Usually we
are interested in the three blocks inside, but we can also think of it
as one big block delimited by $p_1$ and $p_4$. Then the lemma applies,
and motivates the following definition:
\begin{align*}
X := \{ (p_1, p_2, p_3, p_4) \in \Enn^4 \, : \, p_1 \leq p_2 \leq p_3 \leq p_4, p_1 < p_4, \parent{p_1} = \parent{p_2} = \parent{p_3} = \parent{p_4} \}.
\end{align*}
We have defined $X$ so that it is precisely the set of positions that delimit a nonempty triple block with an empty parent.

Now let us consider the set $g(X)$.  Since $\mathbf{p}$ is in $X$, we know
from the lemma 
that it delimits a triple block $b_1 b_2 b_3$ that is a subword of
$\varphi(c)$ for some $c$. Since $\varphi(0), \varphi(1), \varphi(3),
\varphi(4)$ all occur in the first seven characters of $\mathbf{w}$, and $g$
depends on the content of the triple block and not its absolute position, it
suffices to compute $g(\mathbf{p})$ for all $\mathbf{p} \in X$ such that $p_4 <
7$. The lemma states that $b_1 b_2 b_3$ is either $0$ or $4$, so the triple
block is one of $w[0, 1), w[3, 4), w[5, 6)$. Thus, 
\begin{align*}
g(X) = \{ & g(0,0,0,1), g(0,0,1,1), g(0,1,1,1), \\
& g(3,3,3,4), g(3,3,4,4), g(3,4,4,4), \\
& g(5,5,5,6), g(5,5,6,6), g(5,6,6,6) \}.
\end{align*}

A walk in $\mathcal{T}^4$ from $(0,0,0,0)$ to $\mathbf{q}$, where $\mathbf{q}$
delimits a nonempty 
triple block, must pass through the set $X$. Thus, we might consider
starting our walk from a node in $X$ instead of from $(0,0,0,0)$. If our walk
starts at some $\mathbf{p} \in X$ then we have $p_1 < p_4$ and hence $q_1 <
q_4$, which helps us to show that $q_1 < q_2 < q_3 < q_4$. Hence, additive
cubes
correspond to walks as described in Theorem~\ref{sumcubestowalks}. 
\begin{theorem}
\label{sumcubestowalks}
Given an additive
cube delimited by $\mathbf{q}$, there is a walk in $\mathcal{G}$
from $g(\mathbf{p})$ to $g(\mathbf{q})$, where $\mathbf{p}$ is in $X$ and
$g(\mathbf{q})$ is in the set $Z := \Sigma^4 \times \mathfrak{L} \times
\mathfrak{L}$. Conversely, if we are given a walk in $\mathcal{G}$ starting in
$g(X)$ and ending at $v \in Z$, there is an additive
cube delimited by $\mathbf{q}$
where $g(\mathbf{q}) = v$.
\end{theorem}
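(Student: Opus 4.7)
I would handle the two directions separately, in both cases translating between walks in $\mathcal{T}^4$ and walks in $\mathcal{G}$ via the child-bijective homomorphism $g$.

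For the forward direction, let $b_1 b_2 b_3$ be an additive cube delimited by $\mathbf{q} = (q_1,q_2,q_3,q_4)$. The equal-length and equal-sum conditions give $\psi(b_2)-\psi(b_1) = \sigma(q_3)-2\sigma(q_2)+\sigma(q_1) \in \mathfrak{L}$ and similarly $\psi(b_3)-\psi(b_2) \in \mathfrak{L}$, so $g(\mathbf{q}) \in Z$. To locate a suitable starting vertex in $X$, I would apply the preceding lemma to the \emph{single} outer block $\mathbf{w}[q_1, q_4)$: its ancestral sequence terminates in a last nonempty ancestor $\mathbf{w}[p_1, p_4)$ whose parent is empty. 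Emptiness of that parent means $\parent{p_1} = \parent{p_4}$; monotonicity of $\parent{\cdot}$ then forces $\parent{p_2}$ and $\parent{p_3}$ (which are squeezed between $\parent{p_1}$ and $\parent{p_4}$) to take the same value, so $\mathbf{p} := (p_1,p_2,p_3,p_4)$ lies in $X$. Reversing the ancestral chain from $\mathbf{p}$ to $\mathbf{q}$ produces a walk in $\mathcal{T}^4$, and its image under the graph homomorphism $g$ is the desired walk in $\mathcal{G}$ from $g(\mathbf{p})$ to $g(\mathbf{q})$.

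For the converse, suppose we have a walk in $\mathcal{G}$ from $g(\mathbf{p})$ (with $\mathbf{p} \in X$) to some $v \in Z$. By child-bijectivity of $g$ (established in the previous subsection), Proposition~\ref{bijectwalk} lifts this walk to a unique walk in $\mathcal{T}^4$ from $\mathbf{p}$ to some $\mathbf{q}$ with $g(\mathbf{q}) = v$. Because the last two coordinates of $v$ lie in $\mathfrak{L}$ and these coordinates of $g(\mathbf{q})$ are precisely $\psi(b_2)-\psi(b_1)$ and $\psi(b_3)-\psi(b_2)$ (where $b_1 b_2 b_3$ is the triple block delimited by $\mathbf{q}$), the three blocks have matching lengths and matching sums. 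It remains to verify that the blocks are nonempty. Since $\mathbf{p} \in X$ has $p_1 < p_4$ and $\mathbf{p}$ is an iterated parent of $\mathbf{q}$, the contrapositive of the monotonicity of $\parent{\cdot}$ (the fact recorded just before the preceding lemma, that $p_i < q_i$ implies $p_0 < q_0$ for ancestral sequences) gives $q_1 < q_4$. Combined with the equal-length condition this forces each block to have positive length, so $q_1 < q_2 < q_3 < q_4$ and $\mathbf{q}$ genuinely delimits an additive cube.

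The main obstacle is the forward direction's identification of the starting vertex in $X$: one has to recognize that the right object to feed into the previous lemma is the outer block $\mathbf{w}[q_1, q_4)$ viewed as a single block, not the three component blocks individually, and then invoke monotonicity of $\parent{\cdot}$ to collapse all four coordinates of the terminal ancestor to a common parent. Once $X$ is entered correctly, the rest is a routine translation between $\mathcal{T}^4$-walks and $\mathcal{G}$-walks that the preceding infrastructure has already made available.
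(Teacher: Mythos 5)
Your proposal is correct and follows essentially the same route as the paper: the forward direction applies the preceding lemma to the outer block $\mathbf{w}[q_1,q_4)$ and monotonicity of $\parent{\cdot}$ to land in $X$ (exactly the paper's motivation for defining $X$), then pushes the ancestral walk through $g$; the converse lifts the walk via child-bijectivity and Proposition~\ref{bijectwalk}, reads off equal lengths and sums from membership in $Z$, and uses $p_1 < p_4 \Rightarrow q_1 < q_4$ to get nonempty blocks. No substantive differences from the paper's argument.
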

\begin{proof}
Suppose $\mathbf{q}$ delimits an additive
cube. Then the difference between two block
vectors is in $\mathfrak{L}$, so $g(\mathbf{q})$ is in $Z$. By the previous
lemma, there exists some ancestor $\mathbf{p} \in \mathcal{T}^4$ of
$\mathbf{q}$ such that $\mathbf{p}$ delimits a nonempty triple block, but
$\parent{\mathbf{p}}$ delimits $\varepsilon \varepsilon \varepsilon$. By our
construction of $X$, $\mathbf{p}$ belongs to $X$, and our walk from
$\mathbf{p}$ to $\mathbf{q}$ in $\mathcal{T}^4$ maps to a walk from
$g(\mathbf{p})$ to $g(\mathbf{q})$ in $\mathcal{G}$. 

In the other direction, suppose we have a walk in $\mathcal{G}$ starting at
$g(\mathbf{p})$ and ending at some vertex $v$ in $Z$. The walk in $\mathcal{G}$
corresponds to a walk in $\mathcal{T}^4$ from $\mathbf{p}$ to some $\mathbf{q}$
and $v = g(\mathbf{q})$. Now $g(\mathbf{q})$ is in $Z$, so the block vector
differences are in $\mathfrak{L}$, and therefore all blocks have the same
length and sum. Since $p_1 < p_4$ we know that $q_1 < q_4$ and thus every block
has positive length giving $q_1 < q_2 < q_3 < q_4$. We conclude that
$\mathbf{q}$ delimits an additive cube. 
\end{proof}

\subsection{Reduction to a finite subgraph}

The relationship between additive cubes and walks in $\mathcal{G}$ is nice, but
$\mathcal{G}$ is an infinite graph because $\Zee^4$ is infinite, and it is
difficult to apply graph algorithms to an infinite graph. Recall the main
result of the previous section, which states that if $b_0$ and $c_0$ are blocks
with the same length and sum, and $\{ b_i \}_{i=0}^{\infty}, \{ c_i
\}_{i=0}^{\infty}$ are their ancestral sequences respectively then $\psi(b_i) -
\psi(c_i) \in \mathfrak{U}$, where $\mathfrak{U}$ is a finite set we can
enumerate. In other words, if $\mathbf{p}$ delimits an
additive cube and $\mathbf{q}$
is any node on the walk from $(0,0,0,0)$ to $\mathbf{p}$, and $\mathbf{q}$
delimits blocks $d_1 d_2 d_3$ then $\psi(d_i) - \psi(d_j) \in \mathfrak{U}$ for
all $i, j$. Let $g(\mathbf{q}) = (\mathbf{c}, \mathbf{u}, \mathbf{v})$ and note
that by the definition of $g$, we have $\mathbf{u} = \psi(d_2) - \psi(d_1)$,
$\mathbf{v} = \psi(d_3) - \psi(d_2)$ and finally $\mathbf{u} + \mathbf{v} =
\psi(d_3) - \psi(d_1)$. It follows that $\mathbf{u}$, $\mathbf{v}$, and
$\mathbf{u}+\mathbf{v}$ are all in $\mathfrak{U}$. 

Define the set 
$$
H := \{ (\mathbf{c}, \mathbf{u}, \mathbf{v}) \ :\  \mathbf{u}, \mathbf{v}, \mathbf{u}+\mathbf{v} \in \mathfrak{U} \text{ and } \mathbf{c} \in \Sigma^4 \}.
$$
Any node along the path to an additive cube must be in $H$, so we may restrict our
search to the subgraph of $\mathcal{G}$ induced by $H$, call it $\mathcal{G}'$.
Notice that $H$ is a subset of $\Sigma^4 \times \mathfrak{U} \times
\mathfrak{U}$, so it contains at most $4^4 \times 503 \times 503 \doteq 64.7
\times 10^6$ elements and thus $\mathcal{G}'$ is finite. We can update
Theorem~\ref{sumcubestowalks} so that the walks must lie in $\mathcal{G}'$.
\begin{corollary}
Given an additive cube delimited by $\mathbf{q}$, there is a walk in $\mathcal{G}'$
from $g(\mathbf{p})$ to $g(\mathbf{q})$, where $\mathbf{p}$ is in $A := g(X)
\cap H$ and $g(\mathbf{q})$ is in the set $B := Z \cap H$. Conversely, if we
are given a walk in $\mathcal{G}$ starting in $A$ and ending at some $\beta \in
B$, there is an additive cube delimited by $\mathbf{q}$ where $g(\mathbf{q}) =
\beta$.
\end{corollary}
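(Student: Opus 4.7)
The plan is to derive both directions from Theorem~\ref{sumcubestowalks} by establishing that every vertex visited on the relevant walk actually lies in the subset $H$. For the converse direction this is automatic: since $A \subseteq g(X)$ and $B \subseteq Z$, any walk in $\mathcal{G}$ from $\alpha \in A$ to $\beta \in B$ is \emph{a fortiori} a walk from $g(X)$ to $Z$, and Theorem~\ref{sumcubestowalks} delivers the additive cube $\mathbf{q}$ with $g(\mathbf{q}) = \beta$ immediately.

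For the forward direction I would start by invoking Theorem~\ref{sumcubestowalks} on a given additive cube $b_1 b_2 b_3$ delimited by $\mathbf{q}$ to obtain a walk in $\mathcal{G}$ from $g(\mathbf{p})$ to $g(\mathbf{q})$ with $\mathbf{p} \in X$. The task is then to upgrade this to a walk in $\mathcal{G}'$, i.e., to show that every vertex on the walk lies in $H$. Since $g$ is child-bijective, each vertex on the walk has the form $g(\mathbf{r})$ for some $\mathbf{r}$ on the corresponding walk from $\mathbf{p}$ to $\mathbf{q}$ in $\mathcal{T}^4$; writing $\mathbf{r} = (r_1, r_2, r_3, r_4)$, the triple block delimited by $\mathbf{r}$ consists of three blocks $d_1, d_2, d_3$ that are the level-matched ancestors in $\mathcal{T}$ of $b_1, b_2, b_3$ respectively.

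The key step will be to apply the main results of Section~\ref{sec:seq}. Since $b_1, b_2, b_3$ pairwise share the same length and same sum, Theorems~\ref{thm34} and~\ref{thm12} apply to each of the three pairs $(b_i, b_j)$, yielding $\psi(d_i) - \psi(d_j) \in \mathfrak{U}$ for all $1 \leq i,j \leq 3$. Writing $g(\mathbf{r}) = (\xi(\mathbf{r}), \mathbf{u}, \mathbf{v})$ with $\mathbf{u} = \psi(d_2) - \psi(d_1)$ and $\mathbf{v} = \psi(d_3) - \psi(d_2)$, the vectors $\mathbf{u}$, $\mathbf{v}$, and $\mathbf{u} + \mathbf{v} = \psi(d_3) - \psi(d_1)$ are all in $\mathfrak{U}$, hence $g(\mathbf{r}) \in H$. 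Specializing to the endpoints gives $g(\mathbf{p}) \in g(X) \cap H = A$ and $g(\mathbf{q}) \in Z \cap H = B$, finishing the argument.

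The main (and essentially only) obstacle is bookkeeping: one must verify that as $\mathbf{r}$ moves up through successive parents in $\mathcal{T}^4$, its four coordinates track, in lockstep, the parents of the four original delimiters of the additive cube, so that the three delimited blocks are really the level-matched ancestors of $b_1,b_2,b_3$. This is exactly the content of the componentwise definition of $\parent{\mathbf{p}}$ from Section~2, so no new combinatorial content is needed beyond what Section~\ref{sec:seq} already provides.
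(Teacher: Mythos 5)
Your proposal is correct and follows essentially the same route as the paper, which proves this corollary in the paragraph immediately preceding it: the converse is Theorem~\ref{sumcubestowalks} verbatim since $A \subseteq g(X)$ and $B \subseteq Z$, and the forward direction upgrades the walk to $\mathcal{G}'$ by noting that every vertex on it is $g$ of an ancestor of $\mathbf{q}$, so Theorems~\ref{thm34} and~\ref{thm12} applied to the pairwise equal-length, equal-sum blocks of the cube put $\mathbf{u}$, $\mathbf{v}$, and $\mathbf{u}+\mathbf{v}$ in $\mathfrak{U}$, i.e., every vertex lies in $H$.
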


Since $\mathcal{G}'$ is a practical size, we can determine whether there is a
path from $A$ to $B$ within $\mathcal{G}'$ using a computer, but first we need
to be able to compute $A$, $B$ and $H$. Since we can enumerate the set
$\mathfrak{U}$, it is straightforward to enumerate $\Sigma^4 \times
\mathfrak{U} \times \mathfrak{U}$ and then narrow it down to $H$. Also, we can
test whether a vector is in the lattice $\mathfrak{L}$ using dot products and
therefore we can determine if an element is in $Z = \Sigma^4 \times
\mathfrak{L} \times \mathfrak{L}$. Thus, we can list the elements of $B = Z
\cap H$ by testing whether each element in $H$ is also in $Z$. Earlier we
showed that $g(X)$ is a nine-element set, and it is not difficult to check that
$g(X) \subseteq H$, so $A = g(X) \cap H = g(X)$. 

We are now ready to complete the proof of our main result.
\begin{theorem}
The infinite word $\bf w$ contains no additive cubes.
\end{theorem}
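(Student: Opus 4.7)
The plan is to invoke the Corollary at the end of the previous subsection, which reduces the question to pure finite graph reachability. The Corollary tells us that an additive cube in $\mathbf{w}$ would produce a walk in the finite graph $\mathcal{G}'$ from some vertex in $A = g(X) \cap H$ to some vertex in $B = Z \cap H$, and conversely, any such walk yields an additive cube. Hence it suffices to show that no vertex of $B$ is reachable in $\mathcal{G}'$ from $A$.

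To carry this out, I first enumerate the vertex set $H$. Starting from the $503$-element set $\mathfrak{U}$ stored in a hash table, I iterate over all triples $(\mathbf{c}, \mathbf{u}, \mathbf{v}) \in \Sigma^4 \times \mathfrak{U} \times \mathfrak{U}$ and retain exactly those with $\mathbf{u} + \mathbf{v} \in \mathfrak{U}$. The edges of $\mathcal{G}'$ are then generated by iterating over the (small constant) edge set of $\mathcal{Q}^4$ and, for each source vertex $(\mathbf{c}, \mathbf{u}, \mathbf{v}) \in H$ and edge label $(a_1, a_2, a_3, a_4)$, computing the target via the consistency equations
\begin{align*}
\mathbf{u}' &= M \mathbf{u} + \psi(a_3) - 2 \psi(a_2) + \psi(a_1), \\
\mathbf{v}' &= M \mathbf{v} + \psi(a_4) - 2 \psi(a_3) + \psi(a_2),
\end{align*}
retaining the edge only when the target lies in $H$. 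The set $A$ is computed directly by evaluating $g$ on the nine elements of $X$ listed earlier, all of which happen to lie in $H$; the set $B$ is computed by filtering $H$ for vertices with both $\mathbf{u}, \mathbf{v} \in \mathfrak{L}$, a pair of dot-product tests.

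With $\mathcal{G}'$, $A$, and $B$ explicitly available, the main step is a breadth-first search from $A$ through $\mathcal{G}'$, accumulating the set of reachable vertices and verifying that this set is disjoint from $B$. Combined with the Corollary, this disjointness immediately yields the theorem. In practice it is more efficient to build $\mathcal{G}'$ lazily, generating neighbours only as they are encountered during the BFS, so that only the reachable portion of $H$ is ever materialised.

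The main obstacle is that the argument is, unavoidably, computer-assisted: ruling out walks through a graph with up to $|\Sigma|^4 \cdot |\mathfrak{U}|^2 \approx 6.5 \times 10^7$ candidate vertices cannot be done by hand. However, every step is fully mechanical—hash-based membership tests in $\mathfrak{U}$ and $H$, integer linear algebra in $\mathbb{Z}^4$, and a standard graph traversal—so the computation is easy to implement, audit, and reproduce, and the constituent sets $\mathfrak{U}, H, A, B$ can each be independently verified by enumeration.
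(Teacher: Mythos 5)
Your proposal matches the paper's proof: both reduce the theorem to the preceding corollary and then verify by a mechanical reachability search in the finite graph $\mathcal{G}'$ that no vertex of $B$ is reachable from $A$ (the paper reports the reachable set has $135572$ vertices and is disjoint from $B$). The extra implementation details you give (lazy edge generation, BFS, hash-based membership) are sensible but do not change the argument.
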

\begin{proof}
Our computer search of $\mathcal{G}'$ computes the set of vertices $R \subseteq
H$ that are reachable from $A$. It turns out that $\abs{R} = 135572$, but $R
\cap B$ is empty, so we conclude that $\bf w$ contains no additive cubes. 
\end{proof}

\begin{corollary}
$\Enn^+$ is not uniformly $3$-repetitive.
\end{corollary}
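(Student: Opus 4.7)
The plan is to unpack the definitions and then appeal directly to the main theorem. Recall that $\Enn^+$ is uniformly $3$-repetitive if and only if, for every finite alphabet $\Sigma$ and every morphism $\varphi : \Sigma^+ \to \Enn^+$, there exists an integer $R(\varphi,3)$ such that every word in $\Sigma^+$ of length at least $R(\varphi,3)$ contains a factor that is a uniform $3$-power mod $\varphi$. To refute this, I need only exhibit one choice of $\Sigma$ and $\varphi$ for which no such $R$ works.

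Concretely, I take $\Sigma = \{0,1,3,4\}$ and let $\varphi : \Sigma^+ \to \Enn^+$ be the identity embedding (viewing each letter as the corresponding natural number, and concatenation in $\Sigma^+$ mapping to addition in $\Enn^+$). Under this $\varphi$, the image of a word $x \in \Sigma^+$ is exactly $\sum x$, the sum of its letters. Hence a factor $w = w_1 w_2 w_3$ with $|w_1| = |w_2| = |w_3|$ is a uniform $3$-power mod $\varphi$ precisely when $\sum w_1 = \sum w_2 = \sum w_3$, i.e.\ when $w$ is an additive cube in the sense defined in Section~1.

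Now suppose, for a contradiction, that $\Enn^+$ is uniformly $3$-repetitive. Then there exists $R = R(\varphi,3)$ such that every word in $\Sigma^+$ of length at least $R$ contains an additive cube as a factor. Take any prefix of the infinite word $\bf w$ of length at least $R$; by hypothesis, this prefix contains an additive cube, and hence so does $\bf w$ itself. But the main theorem just established asserts that $\bf w$ contains no additive cubes, a contradiction. Therefore $\Enn^+$ is not uniformly $3$-repetitive.

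There is no real obstacle here: the entire content of the corollary is already packaged in the main theorem together with the observation from the introduction that the non-uniform-$3$-repetitiveness of $\Enn^+$ is equivalent to the existence of an infinite word over a finite subset of $\Enn$ avoiding additive cubes. The only care needed is to spell out explicitly the choice of $\Sigma$ and $\varphi$ that realizes this equivalence, which is what the argument above does.
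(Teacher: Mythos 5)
Your argument is correct and is exactly the route the paper intends: the corollary is stated without proof because it follows immediately from the main theorem via the equivalence noted in the introduction (non-uniform-$3$-repetitiveness of $\Enn^+$ is equivalent to the existence of an infinite word over a finite subset of $\Enn$ avoiding additive cubes), and your explicit choice of $\Sigma=\{0,1,3,4\}$ with the sum morphism simply spells out that equivalence. No gaps; this matches the paper's reasoning.
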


\section{A two-sided infinite word avoiding additive cubes}

Above we have shown that the (one-sided) infinite word 
$$ \phiright(0) = 031430110343430 \cdots$$
avoids additive cubes.   Using the appropriate iterations of $\varphi$,
we now prove the same result for two-sided
infinite words.  (Such a word is a map from $\Zee$ to a finite
set, in this case, $\Sigma = \lbrace 0, 1, 3, 4 \rbrace$, as opposed
to the one-sided infinite words --- maps from $\Enn$ to $\Sigma$ --- we
have discussed thus far.)  

For notation and results involving morphisms and two-sided infinite
words, see \cite{Shallit&Wang:2002}.  In particular, we write
a two-sided infinite word as $\cdots a_{-2} a_{-1} a_0 . a_1 a_2 \cdots$,
where the
period is written to the left of the character indexed with $1$.
Also, if $h$ is a morphism satisfying $h(a) = xa$ for some word $x$,
then we define $\hright(a)$ to be the left-infinite word
$\cdots h^3(x) h^2(x) h(x) x a $.

\begin{theorem}
There exists a two-sided infinite word over
$\Sigma= \lbrace 0, 1, 3, 4 \rbrace$ avoiding additive cubes.
\end{theorem}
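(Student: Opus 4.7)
The plan is to build a two-sided infinite word $W$ extending $\mathbf{w}$ to the left, and then to show that every finite factor of $W$ already occurs in $\mathbf{w}$; the theorem will then follow from the result just proved for $\mathbf{w}$. Set $h = \varphi^2$, so $h(0) = 031$ and $h(3) = 43$. Iterating $h(3) = 4\cdot 3$ gives $h^{n+1}(3) = h^n(4)\cdot h^n(3)$, and an easy induction shows
\[
h^n(3) = h^{n-1}(4)\,h^{n-2}(4)\cdots h(4)\cdot 4\cdot 3
\]
for every $n \ge 1$; in particular each $h^n(3)$ is a suffix of $h^{n+1}(3)$. The $h^n(3)$ therefore form a nested chain of suffixes of strictly increasing length and determine a left-infinite word $L = \cdots h^2(4)\,h(4)\cdot 4\cdot 3$. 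Define
\[
W = L\,.\,\mathbf{w},
\]
with the period placed immediately before the initial $0$ of $\mathbf{w}$. One can check that $W$ is a fixed point of $h$, but this will not actually be used below; what matters is only that $W$ is a well-defined two-sided word over $\Sigma$.

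The heart of the argument is the claim that every finite factor of $W$ is a factor of $\mathbf{w}$, which I would prove by splitting into three cases. \emph{(a)} A factor lying entirely to the right of the period is a factor of $\mathbf{w}$ immediately. \emph{(b)} A factor lying entirely to the left is a factor of $L$, hence a factor of $h^n(3)$ for every sufficiently large $n$; since the letter $3$ occurs in $\mathbf{w}$ (for instance at position $1$) and $\mathbf{w} = h^n(\mathbf{w})$, the word $h^n(3)$ is a factor of $\mathbf{w}$, and so is the original factor. \emph{(c)} A factor straddling the period decomposes as $u\,v$, with $u$ a suffix of $L$ and $v$ a prefix of $\mathbf{w}$. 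For $n$ large enough, $u$ is a suffix of $h^n(3)$ and $v$ is a prefix of $h^n(0)$, so $u\,v$ is a factor of $h^n(3)\,h^n(0) = h^n(30)$.

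The only real obstacle is case \emph{(c)}, and it is resolved by the observation that $30$ itself is a factor of $\mathbf{w}$, since $\mathbf{w}[4]\,\mathbf{w}[5] = 30$. Applying $\mathbf{w} = h^n(\mathbf{w})$ to this occurrence yields an occurrence of $h^n(30)$ in $\mathbf{w}$, and hence $u\,v$ is a factor of $\mathbf{w}$ after all. Once factor containment is established, the earlier theorem that $\mathbf{w}$ contains no additive cubes transfers verbatim to $W$, proving the theorem.
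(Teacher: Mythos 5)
Your proposal is correct and follows essentially the same route as the paper: with $h=\varphi^2$ you build the two-sided word from the left-iteration on $3$ (since $h(3)=43$) and the right-iteration on $0$ (which is $\mathbf{w}$ itself), and you reduce every factor to a factor of some $h^n(30)$, which lies in $\mathbf{w}$ because $30$ occurs in $\mathbf{w}$ (the paper phrases this as $30$ being a factor of $\varphi^4(0)$). Your write-up just makes explicit the factor-containment argument that the paper leaves implicit.
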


\begin{proof}
Note that $30$ is a factor of $\varphi^4 (0)$.  It follows
that for all $n \geq 0$, the word $\varphi^n (30)$ contains no
additive cube.  Now $\varphi^2(3) = 43$, and
$\varphi^2(0) = 031$.
% so for $n \geq 2$ we have
% $$\varphi^{2n} (30) = \varphi^{2n-2} (4) 
%\varphi^{2n-4} (4) \cdots \varphi^2 (4) 4 3. 0 3 1 \varphi^2(31)
% \cdots \varphi^{2n-4} (31) \varphi^{2n-2} (31).$$
Letting $h = \varphi^2$, we see that
$$ \hleft(3) \ . \hright(0) =  \cdots 03143034343034343.03143011034343031011011 \cdots $$
is a two-sided infinite word avoiding additive cubes.
\end{proof}

\section{Open problems}

We do not know if the alphabet size of $4$ given in this paper is optimal
for avoiding additive cubes.  Since an abelian cube is necessarily an additive cube,
and we know it is impossible to avoid abelian cubes over an alphabet of
size $2$, the alphabet size cannot be $2$.  But it is still conceivable
that, to avoid additive cubes,  some alphabet of cardinality $3$ might suffice.
By a depth-first search approach, for example, we have generated a finite
word of length 1288 over the alphabet $\lbrace 0,1,2 \rbrace$ avoiding
additive cubes.

The more difficult question of whether it
is possible to avoid additive squares over an alphabet equal to some
finite subset of $\Zee$ remains open.   Since alphabet size $4$ is needed
to avoid abelian squares, the alphabet must be at least this large.  However,
as Freedman has shown \cite{Freedman:2010}, 
the longest word over $\lbrace a, b, c, d \rbrace$ with
$a+d = b+c$
avoiding additive squares is of length $\leq 60$.  Also see \cite{Justin:1972}.

\section{Acknowledgments}

We are pleased to thank Thomas Stoll for many discussions on this problem.

\newcommand{\noopsort}[1]{} \newcommand{\singleletter}[1]{#1}

\end{document}